\documentclass[11pt]{article}
\usepackage{amsmath,amssymb,amsthm,amsfonts,bm}
\usepackage{geometry}
\usepackage{hyperref}

\newtheorem{theorem}{Theorem}
\newtheorem{remark}{Remark}
\newtheorem{proposition}{Proposition}
\newtheorem{lemma}{Lemma}
\newtheorem{definition}{Definition}
\newtheorem{corollary}{Corollary}
\newtheorem{example}{Example}

\begin{document}

\title{Double Covariance Model for Entangled Quantum States: Gaussian Reduction to Second Order Covariances}
\author{Andrei Khrennikov\\
Center for Mathematical Modeling in Physics and Cognitive Sciences\\ Institute for Mathematics and Physics,
Linnaeus University, V\"axj\"o, SE-351 95, Sweden}
\date{}
\maketitle

\abstract{We propose a novel approach to the problem of interconnecting the probabilistic formalisms of classical and quantum physics, focusing on its most challenging aspect: the classical probabilistic generation of entangled states. We show that the statistics encoded in the density operators of composite quantum systems correspond to fourth-order classical statistics. Specifically, to generate a density operator, one must consider the covariance of a random covariance operator. We term this framework the Double Covariance Model (DCM). This double covariance possesses a non-trivial internal structure that arises from the interplay between two distinct time scales, combining temporal and statistical covariances.
In this article, we exploit a well-known property of Gaussian processes: the second-order moment determines the moments of higher orders, specifically the fourth-order moment. This Gaussian reduction simplifies the DCM by reducing it to second-order statistics. Utilizing (circular) Gaussian processes simplifies and generalizes the DCM construction for entangled states, rendering it mathematically rigorous. Furthermore, it clarifies the classical probabilistic meaning of concurrence, a foundational quantitative measure of entanglement.}

{\bf keywords:} classical vs quantum probability,  double covariance model, quantum entangled state, classical Gaussian processes, fourth-order vs. second order statistics, micro- and macro-time scales 

\section{Introduction}

Understanding the interrelation between the probabilistic formalisms of classical and quantum physics remains a complex foundational problem. This topic has been analyzed and explored from various perspectives\footnote{For foundational overviews, see the monographs \cite{khrennikov2009,khrennikov2016}. Key mathematical works reviewing the structural interplay between classical and quantum probability include \cite{Accardi2000, AccardiFidaleo2002, gudder1988, Gudder1998Anomalies, meyer1993, Obata2004}; for epistemological analyses, see \cite{Plotnitsky2010, Plotnitsky2022,Kup,Red1,Surace}. For physical frameworks exploring classical probabilistic structures beyond standard quantum theory, see, for example,  \cite{Mad27,DeB60,Boh52,Dur09,AM1,AM3,LA2,hooft1,hooft2,Mhooft,Elze1, Elze2, Bush1,Bush2a,Bush2b,Wya05,ALA,PCSFT1}.}  One of the most challenging issues is the possibility of classical probabilistic modeling of the states of composite quantum systems, with a specific emphasis on the generation of entangled states. A new approach to this problem was proposed in \cite{DCM} within the framework of the {\it Double Covariance Model} (DCM), wherein density operators are classically generated as covariances of covariances. 

In practice, this classical-to-quantum transition is quite subtle. The DCM is based on the interplay between two distinct time scales:
\begin{itemize}
\item a fine time scale (subquantum), 
\item a rough time scale (quantum).    
\end{itemize}
The quantum scale corresponds to the scale of physical measurements, whereas the subquantum time scale represents the ontic time scale—the scale of reality as it is; see \cite{Markov} for the coupling with the hydrodynamic limit model; cf. \cite{ALA}). 

The first-order covariance operator $\hat C_\Delta$ is obtained via temporal correlation over the interval $\Delta$, which determines the measurement time scale. This temporal covariance is a random operator $\hat C_\Delta = \hat C_\Delta(\omega)$, where the randomness represents statistical randomness described by the Kolmogorov probability model. Ultimately, the DCM operates with the statistical covariance operator of this random operator, thereby employing fourth-order statistics. This construction leads to the ``double covariance operator'' $\widehat C: H_A \otimes H_B \to H_A \otimes H_B.$ The classical-to-quantum transition from the double covariance operator $\widehat C$ to the corresponding density operator $\hat \rho_{C}$ is achieved via trace normalization:
\begin{equation}
\label{TR}
\widehat C \to \hat \rho_{C} = \widehat C / \rm{Tr} \widehat C.
\end{equation}    
All density operators, including those of entangled states, can be generated in this manner. Further development of the DCM 
\cite{Markov} led to the derivation of quantum Markovian dynamics by exploring the framework of the hydrodynamic limit.

In light of the above considerations, the DCM shapes entangled quantum states on the basis of fourth-order classical statistics—specifically, the covariance of covariances. In this paper, we exploit a well-known property of Gaussian processes: the second-order covariance completely determines the covariances of higher orders, including the fourth-order covariance used in the DCM. Our primary aim is the Gaussian reduction of the DCM to second-order covariances, which enables the generation of entangled states using classical Gaussian processes. Utilizing these processes simplifies and generalizes the foundational construction proposed in 
\cite{DCM}, rendering it mathematically rigorous. Furthermore, this Gaussian formalism provides a subquantum interpretation of concurrence—one of the primary quantitative measures of entanglement—in terms of energy redistribution.

The classical (subquantum) states of the subsystems $S_A$ and $S_B$ of a composite quantum system $S_{AB}$ are mathematically described by circular Gaussian stochastic processes $X_t$ and $Y_t$ taking values in their corresponding Hilbert spaces. Surprisingly, the Gaussian DCM analysis demonstrates that entangled quantum states can be generated by statistically independent Gaussian processes. Within the DCM, the fundamental origin of entanglement is not statistical dependence, but rather temporal synchronization at the subquantum time scale. This synchronization can be realized via various orthogonal decompositions of the space $L_2(\Delta)$; such decompositions determine the synchronized energy redistribution of the subquantum signals $X_t$ and $Y_t$.

We emphasize that DCM is in its early stages of development and does not aim to provide a comprehensive classical probabilistic framework for all of quantum mechanics. At present, DCM addresses a specific foundational challenge: constructing a viable classical probabilistic model that accounts for quantum entanglement. Beyond this foundational role, DCM offers immediate pragmatic utility for the classical probabilistic generation of entangled states, with applications ranging from quantum-inspired computing  to quantum-like modeling in cognition and decision-making 
\cite{Bio}. Finally, we highlight a key foundational characteristic of DCM: broad classes of distinct classical states (stochastic processes) can map to the exact same quantum state, a property that shares conceptual ground with the work of 't Hooft \cite{hooft1,hooft2} (see also 
\cite{Mhooft}). 

\section{Mathematical and Physical Preliminaries}

\subsection{Physics}

The Double Covariance Model (DCM) is intended as a classical probabilistic
framework for the generation of quantum states of composite systems.
We consider a bipartite quantum system
\[
S_{AB}=S_A+S_B,
\]
with subsystem state spaces represented by complex Hilbert spaces
\[
H_A,\qquad H_B.
\]
The corresponding quantum state space of the composite system is the tensor
product
\[
H_A\otimes H_B.
\]
In the DCM, the quantum state of the composite system is not taken as
fundamental. Instead, it is generated from a deeper classical probabilistic
description at a finer time scale. The basic subquantum variables are
stochastic processes
\[
X=\{X_t\}_{t\in\Delta},\qquad
Y=\{Y_t\}_{t\in\Delta},
\]
taking values in the subsystem Hilbert spaces \(H_A\) and \(H_B\), respectively.
These processes represent the classical states of the subsystems \(S_A\) and
\(S_B\) at the subquantum time scale.

The DCM is based on the interplay of two distinct levels of description.

\begin{itemize}
\item[(i)] At the \emph{subquantum} time scale, the processes \(X_t\) and \(Y_t\)
describe the fine temporal dynamics of the subsystems.

\item[(ii)] At the \emph{quantum} or observational time scale, one does not
observe the instantaneous values of these processes. Instead, one forms a
temporal covariance over a time window \(\Delta\), and then takes the
statistical covariance of this random temporal covariance.
\end{itemize}

Thus the DCM is built from two successive averaging procedures:
first a temporal averaging over the observation window \(\Delta\), and then a
statistical averaging over the probability space of random parameters.
This is why the resulting object is a \emph{double covariance}: it is a
fourth-order classical statistical object constructed as the covariance of a
random covariance operator.

The physical significance of the model is that it provides a classical
probabilistic mechanism for the generation of quantum states, including
entangled states. In particular, the DCM suggests that quantum entanglement
need not originate from ordinary second-order statistical dependence between
the subquantum processes \(X\) and \(Y\). Instead, it may arise from a more
subtle temporal organization of the subquantum signals within the observation
window \(\Delta\). In the Gaussian setting studied in this paper, this temporal
organization is encoded in the geometry of the Karhunen--Lo\`eve modes and in
the integrated tensors generated by them.

As was emphasized, the DCM is based on the interplay between two time scales: a subquantum
(microscopic) time scale and a quantum (macroscopic) time scale. We denote the
corresponding time variables by \(t\) and \(\tau\), respectively. The length
\(|\Delta|\) of the observation window plays the role of a unit of macroscopic
time. In the present paper we work only with the subquantum time variable \(t\).
We do not study the evolution of the quantum state
\(\hat\rho=\hat\rho(\tau)\); for the derivation of quantum Markov dynamics
within the DCM by means of the hydrodynamic limit formalism, see
\cite{dynamics}. Our focus here is on the generation of density operators at a
fixed macroscopic time scale.

The quantum time scale is interpreted as the time scale of observational
dynamics. In this respect, we follow Bohr's view of quantum mechanics as a
theory of measurement outcomes and their probabilistic structure, rather than a
direct description of underlying microscopic processes \cite{plotnitsky}.

The DCM is related only indirectly to the traditional foundational programme of
hidden-variable theories \cite{Bell1,Bell2}. The subquantum stochastic processes that
represent classical states in the DCM should not be identified with hidden
variables in the conventional sense. Rather, the DCM can be viewed as a natural
development of Prequantum Classical Statistical Field Theory (PCSFT)
\cite{PCSFT1}, a theory of classical random fields underlying quantum phenomena.

At the same time, the DCM differs essentially from PCSFT. PCSFT is formulated in
terms of second-order moments of random fields, whereas the DCM is based on fourth-order moments - a
double-covariance construction and, crucially, on the interplay between
microscopic and macroscopic time scales.

\subsection{Mathematics}

All Hilbert spaces in this paper are complex. For Hilbert spaces \(H_1\) and
\(H_2\), we denote by \(L(H_1,H_2)\) the space of continuous linear operators
from \(H_1\) to \(H_2\). In particular,
$
L(H):=L(H,H).
$
(In the finite dimensional case these are simply spaces of linear operators.)

Throughout the paper, \(H_A\) and \(H_B\) denote finite-dimensional complex
Hilbert spaces associated with the two subsystems. We shall repeatedly use the
canonical identification of the tensor product space \(H_A\otimes H_B\) with
the space of linear operators \(L(H_B,H_A)\), given on simple tensors by
\[
a\otimes b \longleftrightarrow |a\rangle\langle b|,
\qquad a\in H_A,\; b\in H_B.
\]
Accordingly, we will freely pass between vectors in \(H_A\otimes H_B\) and the
corresponding operators in \(L(H_B,H_A)\).

Let \((\Omega,\mathcal F,P)\) be a Kolmogorov probability space. Here $\Omega$ is a set of random parameters (``elementary events''), 
${\cal F}$ is a $\sigma$-algebra of its subsets representing events, and $P$ is a probability measure on $\Omega.$ 
Let
$
H_A,\;  H_B
$
be (finite-dimensional) complex Hilbert spaces associated with the subsystems
\(S_A\) and \(S_B\), respectively.
We consider \(H_A\)- and \(H_B\)-valued stochastic processes
$
X=\{X_t\}_{t\in\Delta},\;
Y=\{Y_t\}_{t\in\Delta},
$
defined on \((\Omega,\mathcal F,P)\), where \(\Delta\subset\mathbb R\) is a
finite observation interval.

\subsubsection*{Centered circular Gaussian processes}

We shall assume throughout that the processes under consideration are
\emph{centered circular Gaussian}.

\begin{definition}
Let \(H\) be a complex separable Hilbert space and let
\[
X=\{X_t\}_{t\in\Delta}
\]
be an \(H\)-valued stochastic process.
We say that \(X\) is a \emph{centered circular Gaussian process} if for every
\(n\ge1\), every choice of times
\[
t_1,\dots,t_n\in\Delta,
\]
and every choice of vectors
\[
h_1,\dots,h_n\in H,
\]
the complex random vector
\[
\bigl(
\langle h_1,X_{t_1}\rangle,\dots,\langle h_n,X_{t_n}\rangle
\bigr)\in\mathbb C^n
\]
is jointly complex Gaussian, centered,
\[
\mathbb E\langle h_j,X_{t_j}\rangle=0,
\qquad j=1,\dots,n,
\]
and circularly symmetric in the sense that
\[
e^{i\theta}
\bigl(
\langle h_1,X_{t_1}\rangle,\dots,\langle h_n,X_{t_n}\rangle
\bigr)
\stackrel{d}{=}
\bigl(
\langle h_1,X_{t_1}\rangle,\dots,\langle h_n,X_{t_n}\rangle
\bigr)
\]
for every \(\theta\in [0, 2\pi)\).
\end{definition}

For centered complex Gaussian processes, circularity is equivalent to the
vanishing of all pseudocovariances:
\[
\mathbb E\bigl[\langle h,X_t\rangle\,\langle g,X_s\rangle\bigr]=0,
\qquad
h,g\in H,\quad t,s\in\Delta.
\]
Hence the law of a centered circular Gaussian process is completely determined
by its covariance kernel.

\subsubsection*{Covariance kernels of the pair \((X,Y)\)}

For the pair of processes \(X\) and \(Y\), we define the operator-valued
second-order covariance kernels
\[
\hat R_{XX}(t,s)=\mathbb E\bigl[\,|X_t\rangle\langle X_s|\,\bigr]
\in L(H_A),
\]
\[
\hat R_{YY}(t,s)=\mathbb E\bigl[\,|Y_t\rangle\langle Y_s|\,\bigr]
\in L(H_B),
\]
\[
\hat R_{XY}(t,s)=\mathbb E\bigl[\,|X_t\rangle\langle Y_s|\,\bigr]
\in L(H_B,H_A),
\]
\[
\hat R_{YX}(t,s)=\mathbb E\bigl[\,|Y_t\rangle\langle X_s|\,\bigr]
\in L(H_A,H_B).
\]

These kernels form the block covariance kernel of the joint process
\[
(X_t,Y_t):
\Omega\to H_A\oplus H_B,
\]
namely
\[
\hat \Gamma(t,s)=
\begin{pmatrix}
\hat R_{XX}(t,s) & \hat R_{XY}(t,s)\\[1mm]
\hat R_{YX}(t,s) & \hat R_{YY}(t,s)
\end{pmatrix}.
\]

If \(X\) and \(Y\) are jointly Gaussian and
\[
\hat R_{XY}(t,s)\equiv 0
\qquad (t,s\in\Delta),
\]
then the processes \(X\) and \(Y\) are independent. This fact will be used
repeatedly below.

\subsubsection*{Temporal Hilbert spaces and covariance operators}

To formulate the Karhunen--Lo\`eve decomposition, we introduce the temporal
Hilbert spaces
\[
\mathcal H_A:=L^2(\Delta;H_A),
\qquad
\mathcal H_B:=L^2(\Delta;H_B),
\]
with inner products
\[
\langle u,v\rangle_{\mathcal H_A}
=
\int_\Delta \langle u(t),v(t)\rangle_{H_A}\,dt,
\]
\[
\langle u,v\rangle_{\mathcal H_B}
=
\int_\Delta \langle u(t),v(t)\rangle_{H_B}\,dt.
\]

A process \(X=\{X_t\}_{t\in\Delta}\) with
\[
\mathbb E\int_\Delta \|X_t\|_{H_A}^2\,dt<\infty
\]
may be viewed as a random element of \(\mathcal H_A\). Its covariance operator
is the positive trace-class operator
\[
\hat T_X:\mathcal H_A\to\mathcal H_A
\]
defined by
\[
\langle \hat T_Xu,v\rangle_{\mathcal H_A}
=
\mathbb E\bigl[
\langle X,u\rangle_{\mathcal H_A}\,
\overline{\langle X,v\rangle_{\mathcal H_A}}
\bigr],
\qquad
u,v\in\mathcal H_A.
\]
Equivalently, \(\hat T_X\) is the integral operator with kernel \(\hat R_{XX}\):
\[
(\hat T_Xu)(t)=\int_\Delta \hat R_{XX}(t,s)u(s)\,ds.
\]

Similarly, the covariance operator of \(Y\) is the positive trace-class
operator
\[
\hat T_Y:\mathcal H_B\to\mathcal H_B
\]
defined by
\[
\langle \hat T_Yu,v\rangle_{\mathcal H_B}
=
\mathbb E\bigl[
\langle Y,u\rangle_{\mathcal H_B}\,
\overline{\langle Y,v\rangle_{\mathcal H_B}}
\bigr],
\qquad
u,v\in\mathcal H_B,
\]
or equivalently,
\[
(\hat T_Yu)(t)=\int_\Delta \hat R_{YY}(t,s)u(s)\,ds.
\]

\subsubsection*{Karhunen--Lo\`eve modes}

Since \(\hat T_X\) and \(\hat T_Y\) are positive trace-class operators, they admit
spectral decompositions
\[
\hat T_X=\sum_{k\ge1}\lambda_k\,|u_k\rangle\langle u_k|,
\qquad
\lambda_k\ge0,
\]
\[
\hat T_Y=\sum_{\ell\ge1}\mu_\ell\,|v_\ell\rangle\langle v_\ell|,
\qquad
\mu_\ell\ge0,
\]
where
\[
\{u_k\}\subset\mathcal H_A,
\qquad
\{v_\ell\}\subset\mathcal H_B
\]
are orthonormal families. These functions are the
\emph{Karhunen--Lo\`eve modes} of the processes \(X\) and \(Y\).

Because \(X\) and \(Y\) are centered circular Gaussian processes, their laws are
completely determined by \(\hat T_X\) and \(\hat T_Y\), and the corresponding
Karhunen--Lo\`eve expansions take the form
\[
X_t(\omega)=\sum_{k\ge1}\sqrt{\lambda_k}\,\xi_k(\omega)\,u_k(t),
\]
\[
Y_t(\omega)=\sum_{\ell\ge1}\sqrt{\mu_\ell}\,\eta_\ell(\omega)\,v_\ell(t),
\]
where \(\{\xi_k\}\) and \(\{\eta_\ell\}\) are standard circular complex Gaussian
random variables satisfying
\[
\mathbb E[\xi_k]=0,
\qquad
\mathbb E[\xi_k\overline{\xi_m}]=\delta_{km},
\qquad
\mathbb E[\xi_k\xi_m]=0,
\]
\[
\mathbb E[\eta_\ell]=0,
\qquad
\mathbb E[\eta_\ell\overline{\eta_m}]=\delta_{\ell m},
\qquad
\mathbb E[\eta_\ell\eta_m]=0.
\]
If \(X\) and \(Y\) are independent, then the two families
\(\{\xi_k\}\) and \(\{\eta_\ell\}\) are independent as well.

The KL modes will play a central role in the Gaussian reduction of the DCM and
in the constructive realization of entangled states.

\subsubsection*{The double covariance construction}

We now recall the basic DCM construction. Define the tensor-valued process
\[
Z_t=X_t\otimes Y_t\in H_A\otimes H_B.
\]
For a time window \(\Delta\), define the time-averaged tensor amplitude
\[
C_\Delta
=
\frac{1}{|\Delta|}
\int_\Delta Z_t\,dt
=
\frac{1}{|\Delta|}
\int_\Delta X_t\otimes Y_t\,dt
\in H_A\otimes H_B.
\]
This is a random vector \(C_\Delta=C_\Delta(\omega)\) in \(H_A\otimes H_B\).

The \emph{double covariance operator} is defined as the statistical covariance
of this random tensor:
\[
\widehat C
=
\mathbb E\bigl[\,|C_\Delta\rangle\langle C_\Delta|\,\bigr]
\in L(H_A\otimes H_B).
\]
The corresponding quantum state is obtained by trace normalization:
\[
\rho_{C}
=
\frac{\widehat C}{\operatorname{Tr}\widehat C}.
\]

Using the identification \(H_A\otimes H_B\cong L(H_B,H_A)\), the random vector
\(C_\Delta\) may equivalently be viewed as the random operator
\[
\widehat C_\Delta
=
\frac{1}{|\Delta|}
\int_\Delta |X_t\rangle\langle Y_t|\,dt
\in L(H_B,H_A).
\]
Thus the DCM may be interpreted either as the covariance of the random tensor
\(C_\Delta\in H_A\otimes H_B\) or as the covariance of the random operator
\(\widehat C_\Delta\in L(H_B,H_A)\).

Expanding the definition of \(\widehat C\), we obtain
\[
\widehat C
=
\frac{1}{|\Delta|^2}
\int_\Delta\int_\Delta
\hat K(t,s)\,dt\,ds,
\]
where
\[
\hat K(t,s)
=
\mathbb E\Bigl[
|X_t\otimes Y_t\rangle
\langle X_s\otimes Y_s|
\Bigr]
\in L(H_A\otimes H_B).
\]
The kernel \(\hat K(t,s)\) is a fourth-order object. The Gaussian reduction theorem
proved in the next section shows that for jointly Gaussian processes it is
completely determined by the second-order covariance kernels
\[
\hat R_{XX},\quad \hat R_{YY},\quad \hat R_{XY},\quad \hat R_{YX}.
\]
This reduction is the basic mathematical mechanism that allows one to describe
DCM states in terms of Gaussian covariance data alone.

\subsection{Second-order covariance kernels}

Define
\begin{equation}
\label{DO3}
\hat R_{XX}(t,s)
=
\mathbb E\big[|X_t\rangle\langle X_s|\big]
\in L(H_A),\; \hat  
R_{YY}(t,s)
=
\mathbb E\big[|Y_t\rangle\langle Y_s|\big]
\in L(H_B),
\end{equation}
\begin{equation}
\label{DO5}
\hat  R_{XY}(t,s)
=
\mathbb E\big[|X_t\rangle\langle Y_s|\big]
\in L(H_B,H_A),\; 
\hat R_{YX}(t,s)
=
\mathbb E\big[|Y_t\rangle\langle X_s|\big]
\in L(H_A,H_B).
\end{equation}
These covariances form the covariance operator of the joint Gaussian process
\begin{equation}
\label{DO7}
\hat  \Gamma(t,s)
=
\begin{pmatrix}
\hat  R_{XX}(t,s) & \hat R_{XY}(t,s)\\
\hat  R_{YX}(t,s) & \hat  R_{YY}(t,s)
\end{pmatrix}.
\end{equation}

\section{Wick reduction}

Let $u,u'\in H_A$ and $v,v'\in H_B$. Then
$$
\langle u\otimes v,
\hat K(t,s)
(u'\otimes v')
\rangle\\
=
\mathbb E\Big[
\langle u,X_t\rangle
\langle v,Y_t\rangle
\overline{\langle u',X_s\rangle}
\overline{\langle v',Y_s\rangle}
\Big].
$$

Since the underlying variables are jointly Gaussian and circular, Isserlis--Wick theorem yields (see appendix for details): 
$$
\mathbb E\Big[
\langle u,X_t\rangle
\langle v,Y_t\rangle
\overline{\langle u',X_s\rangle}
\overline{\langle v',Y_s\rangle}
\Big]
=
\langle u,\hat R_{XX}(t,s)u'\rangle
\langle v,\hat R_{YY}(t,s)v'\rangle\\
+
\langle u,\hat R_{XY}(t,s)v'\rangle
\langle v,\hat R_{YX}(t,s)u'\rangle.
$$
Define the exchange operator $\hat E(t,s)$ by
\begin{equation}
\label{DO8}
\langle u\otimes v,
\hat  E(t,s)
(u'\otimes v')
\rangle
=
\langle u, \hat R_{XY}(t,s)v'\rangle
\langle v,\hat R_{YX}(t,s)u'\rangle.
\end{equation}
Then
\begin{equation}
\label{DO9}
\hat K(t,s)
=
\hat R_{XX}(t,s)\otimes \hat R_{YY}(t,s)
+
\hat  E(t,s).
\end{equation}
Consequently,
\begin{equation}
\label{D=10}
\widehat C
=
\frac{1}{\Delta^2}
\int_\Delta\!\!\int_\Delta
\Big(
\hat R_{XX}(t,s)\otimes \hat R_{YY}(t,s)
+
\hat  E(t,s)
\Big)
dt\,ds.
\end{equation}

\subsection{Gaussian Reduction Theorem}

\begin{theorem}[Gaussian Reduction of DCM]
Let $X_t$ and $Y_t$ be jointly Gaussian Hilbert-space-valued stochastic
processes with finite second moments.

Then the DCM covariance operator
$
\widehat C
=
\mathbb E[|C_\Delta\rangle\langle C_\Delta|]
$
and therefore the normalized density operator
$\rho
=
\frac{\widehat C}{\operatorname{Tr}\widehat C}
$
are completely determined by the second-order covariance operator
\[
\hat \Gamma(t,s)
=
\begin{pmatrix}
\hat R_{XX}(t,s) & \hat R_{XY}(t,s)\\
\hat R_{YX}(t,s) & \hat R_{YY}(t,s)
\end{pmatrix}
\]
of the joint Gaussian process $(X_t,Y_t)$. Equivalently, all fourth-order moments entering the DCM construction
admit a Wick reduction to quadratic expressions in the covariance kernels,
and no independent fourth-order information contributes to 
the  double-covariance operator.
\end{theorem}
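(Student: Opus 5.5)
The argument reduces everything to the fourth-order kernel $\hat K(t,s)$ and then evaluates that kernel by the Isserlis--Wick theorem for centered circular Gaussian vectors. First, using finite second (and hence, by Gaussianity, all) moments, I will justify exchanging expectation and the double time integral. Since $H_A,H_B$ are finite-dimensional and $\Delta$ is bounded, the integrand $\|X_t\|\,\|Y_t\|\,\|X_s\|\,\|Y_s\|$ has finite expectation. This follows by Cauchy--Schwarz together with the Gaussian hypercontractivity bound $\mathbb E\|X_t\|^4\le 3(\mathbb E\|X_t\|^2)^2$, and integrability in $t$ follows from $\int_\Delta \operatorname{Tr}\hat R_{XX}(t,t)\,dt<\infty$. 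Fubini then gives
\[
\widehat C=\frac{1}{|\Delta|^2}\int_\Delta\int_\Delta \hat K(t,s)\,dt\,ds .
\]
Consequently it suffices to show that each $\hat K(t,s)$ is a function of $\hat\Gamma(t,s)$ alone.

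Second, since simple tensors span $H_A\otimes H_B$, I will compute the matrix elements $\langle u\otimes v,\hat K(t,s)(u'\otimes v')\rangle$. These are the mixed moments $\mathbb E[a\,b\,\bar c\,\bar d]$ with $a=\langle u,X_t\rangle$, $b=\langle v,Y_t\rangle$, $c=\langle u',X_s\rangle$, $d=\langle v',Y_s\rangle$, which form a jointly centered circular Gaussian vector by the definition of joint circular Gaussianity. The Wick expansion has three pairings. The pairing $\mathbb E[ab]\,\mathbb E[\bar c\bar d]$ consists of pseudocovariances, so it vanishes by circularity. The two remaining pairings are $\mathbb E[a\bar c]\,\mathbb E[b\bar d]$ and $\mathbb E[a\bar d]\,\mathbb E[b\bar c]$, and they reduce to entries of $\hat R_{XX},\hat R_{YY},\hat R_{XY},\hat R_{YX}$ as in the Wick formula displayed in the Wick reduction section. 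This yields \eqref{DO9}: $\hat K=\hat R_{XX}\otimes\hat R_{YY}+\hat E$, where $\hat E$ is determined by \eqref{DO8}. Sesquilinearity then extends the identity from simple tensors to all of $H_A\otimes H_B$.

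Finally, substituting into the integral gives \eqref{D=10}, which expresses $\widehat C$ through $\hat\Gamma$ only. The same holds for $\operatorname{Tr}\widehat C$, and hence for $\rho$ whenever $\operatorname{Tr}\widehat C>0$. I will note that this trace equals $\mathbb E\|C_\Delta\|^2$, and that $\rho$ is undefined only if $C_\Delta=0$ almost surely. The equivalent formulation, that no independent fourth-order cumulant enters, is exactly the statement that the fourth cumulant of a Gaussian vector vanishes.

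The main obstacle is bookkeeping of the conjugations in the complex Wick theorem. One has to check carefully that circularity kills precisely the $\mathbb E[ab]\,\mathbb E[\bar c\bar d]$ term, and that the cross term pairs $X_t$ with $Y_s$ and $Y_t$ with $X_s$ in the stated order. To handle this I will first derive the complex Isserlis formula from the real one, writing each variable as real plus imaginary part. Circularity is also needed jointly for the pair $(X,Y)$, so I will interpret the hypothesis "jointly Gaussian" as joint circular Gaussianity. Without that assumption an extra pseudocovariance term would appear, though it would still be second order, so the theorem's conclusion would survive in either case.
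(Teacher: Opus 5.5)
Your proposal is correct and follows the paper's route: Wick-reduce the matrix elements $\langle u\otimes v,\hat K(t,s)(u'\otimes v')\rangle$, let circularity kill the $\mathbb E[ab]\,\mathbb E[\bar c\,\bar d]$ pairing as in the appendix, then integrate over $\Delta\times\Delta$ and normalize. Your additions supply rigor the paper omits: the Fubini/fourth-moment bound, the caveat $\operatorname{Tr}\widehat C>0$, and the explicit assumption of \emph{joint} circularity of $(X,Y)$, which the appendix tacitly needs since Definition~1 only concerns each process separately.

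One correction to your closing aside. Without joint circularity the extra term is $|m_t\rangle\langle m_s|$ with $m_t=\mathbb E[X_t\otimes Y_t]$, a cross pseudocovariance that $\hat\Gamma$ does not encode. For example, take $Y$ to be the componentwise complex conjugate of $X$ in fixed orthonormal bases. Then $\hat R_{XY}\equiv 0$, so $\hat\Gamma$ is that of an independent pair with the same marginals, yet $\widehat C$ gains $|\Delta|^{-2}|M\rangle\langle M|$ with $M=\int_\Delta m_t\,dt\neq 0$. So what survives is determination by second-order data including pseudocovariances, not determination by $\hat\Gamma$ as stated.
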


\begin{proof}
The operator $\widehat C$ depends on the kernel
\[
\hat K(t,s)
=
\mathbb E[|X_t\otimes Y_t\rangle\langle X_s\otimes Y_s|].
\]
For arbitrary vectors $u,u'\in H_A$ and $v,v'\in H_B$, matrix elements
of $\hat K(t,s)$ are fourth-order moments of jointly Gaussian scalar variables. 
Applying Isserlis--Wick theorem expresses each such fourth-order moment
uniquely in terms of pair covariances. Hence all matrix elements of
$\hat K(t,s)$ are determined by
$\hat R_{XX},\hat R_{YY},\hat R_{XY},\hat R_{YX}$. Therefore $\hat K(t,s)$ itself is determined by $\hat \Gamma(t,s)$.
Integrating over the observation window yields $\widehat C$, and
normalization yields $\rho$.
\end{proof}

\begin{remark}
For non-Gaussian processes the theorem generally fails. Higher-order
cumulants contribute to $K(t,s)$ and the DCM density operator may then
contain genuinely fourth-order statistical information.
\end{remark}

\section{Gaussian Realization of Arbitrary Pure States of Composite Quantum Systems}

We investigate the possibility of generating arbitrary finite-dimensional
pure quantum states of composite quantum systems  within the DCM framework by means of Gaussian
Hilbert-space-valued stochastic processes. The construction is based on a special organization of
Karhunen--Lo\`eve modes into {\it orthogonal temporal sectors} associated
with the Schmidt decomposition of the target state.

The resulting realization reveals an unexpected feature of DCM.
The generated quantum state is determined by the geometry of
time-averaged tensor modes rather than by ordinary second-order
cross-covariances. In particular, the construction remains valid even when the
underlying Gaussian processes are independent.

Previous DCM constructions \cite{DCM} demonstrated that arbitrary pure quantum
states of composite systems (e.g., the Bell states) can be generated from suitably organized stochastic processes.
Those constructions relied on explicitly synchronized temporal
structures and were generally non-Gaussian. The purpose of the present work is to investigate whether arbitrary
pure states can also be generated within the class of Gaussian
Hilbert-space-valued stochastic processes.

\subsection{The DCM State Construction}

The following elementary lemma will be used repeatedly.

\begin{lemma}
\label{Lemma1} Assume that
$
C_\Delta(\omega)=
\alpha(\omega)|\Psi \rangle \; (a.e.),
$
where $\alpha$ is a complex valued random variable with finite second moment and vector 
$\Psi \neq 0$ is fixed.
Then
$\hat \rho_{C}
= \frac{
|\Psi\rangle\langle\Psi|
}
{
\|\Psi\|^2
}.
$
In particular, if $\|\Psi\|=1$, then
$
\rho_{C}
=
|\Psi\rangle\langle\Psi|.
$
\end{lemma}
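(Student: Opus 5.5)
The plan is to compute $\widehat C$ directly from its definition $\widehat C=\mathbb E\bigl[|C_\Delta\rangle\langle C_\Delta|\bigr]$ and then apply the trace normalization (\ref{TR}). Since $C_\Delta(\omega)=\alpha(\omega)\Psi$ almost everywhere, the rank-one random operator satisfies
\[
|C_\Delta(\omega)\rangle\langle C_\Delta(\omega)|
=\alpha(\omega)\overline{\alpha(\omega)}\,|\Psi\rangle\langle\Psi|
=|\alpha(\omega)|^2\,|\Psi\rangle\langle\Psi|
\]
almost everywhere. Here I use sesquilinearity of the ket-bra, linear in the ket and antilinear in the bra. Because $\Psi$ is deterministic, the expectation acts only on the scalar factor, so
\[
\widehat C=\mathbb E|\alpha|^2\;|\Psi\rangle\langle\Psi| .
\]
The finite second moment of $\alpha$ guarantees that $\mathbb E|\alpha|^2<\infty$, so $\widehat C$ is a well-defined bounded (indeed rank-one) operator. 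In the finite-dimensional setting, taking the expectation entrywise in a basis justifies pulling the constant operator out.

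Next I compute the trace: $\operatorname{Tr}\widehat C=\mathbb E|\alpha|^2\,\operatorname{Tr}|\Psi\rangle\langle\Psi|=\mathbb E|\alpha|^2\,\|\Psi\|^2$. Dividing gives $\hat\rho_C=|\Psi\rangle\langle\Psi|/\|\Psi\|^2$, since the factor $\mathbb E|\alpha|^2$ cancels. If $\|\Psi\|=1$, this reduces to the pure state $|\Psi\rangle\langle\Psi|$.

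The only genuine obstacle is that the normalization requires $\operatorname{Tr}\widehat C\neq0$, that is, $\mathbb E|\alpha|^2>0$. Since $\Psi\neq 0$, this holds exactly when $\alpha$ is not almost surely zero. If $\alpha=0$ a.e., then $C_\Delta=0$ a.e. and $\rho_C$ is undefined, so the statement must be read under the implicit nondegeneracy assumption $P(\alpha\neq0)>0$. I will state this explicitly as the (tacit) hypothesis making $\rho_C$ meaningful; everything else is routine.
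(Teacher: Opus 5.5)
Your proposal is correct and matches the paper's proof: both factor $|\alpha|^2$ out of the rank-one operator to get $\widehat C=\mathbb E|\alpha|^2\,|\Psi\rangle\langle\Psi|$, compute the normalizing trace as $\mathbb E|\alpha|^2\,\|\Psi\|^2$ (the paper writes this as $\mathbb E\|C_\Delta\|^2$), and cancel the common factor. Your explicit nondegeneracy condition $\mathbb E|\alpha|^2>0$ is genuinely needed for $\rho_C$ to be defined; the paper leaves it tacit, and it holds in the later application because the weights $\lambda_k,\mu_k$ are positive.
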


\begin{proof}
Since
$|C_\Delta\rangle\langle C_\Delta|
=
|\alpha|^2
|\Psi\rangle\langle\Psi|,
$
one obtains
$
E\!\left[
|C_\Delta\rangle\langle C_\Delta|
\right]
=
E\!\left[
|\alpha|^2
\right]
|\Psi\rangle\langle\Psi|.
$
Furthermore,
 $
\|C_\Delta\|^2
=
|\alpha|^2
\|\Psi\|^2,
$
hence
$
E\!\left[
\|C_\Delta\|^2
\right]
=
E\!\left[
|\alpha|^2
\right]
\|\Psi\|^2.
$
Substitution into the definition of $\rho_{C}$
gives the result.
\end{proof}

\subsection{Construction of Gaussian Processes}

Let
\[
\{u_k\}_{k\ge1}\subset {\cal H}_A\equiv L^2(\Delta;H_A), \;\{v_\ell\}_{\ell\ge1}\subset {\cal H}_B\equiv L^2(\Delta;H_B)
\]
be orthonormal families, and let $\{\lambda_k\}_{k\ge1},
\qquad
\{\mu_\ell\}_{\ell\ge1}
$
be positive summable sequences;
\[
\sum_{k=1}^{\infty}\lambda_k<\infty,\; 
\sum_{\ell=1}^{\infty}\mu_\ell<\infty.
\]
Let
\(\{\xi_k\}_{k\ge1}\)
and
\(\{\eta_\ell\}_{\ell\ge1}\)
be jointly Gaussian standard random variables with a prescribed covariance structure.

Define the Hilbert-space-valued stochastic processes
\[
X_t
=
\sum_{k=1}^{\infty}
\sqrt{\lambda_k}\,
\xi_k\,u_k(t),\; 
Y_t
=
\sum_{\ell=1}^{\infty}
\sqrt{\mu_\ell}\,
\eta_\ell\,v_\ell(t).
\]
These series converge in
$
L^2\!\bigl(\Omega;{\cal H}_A\bigr)$ and 
$
L^2\!\bigl(\Omega;{\cal H}_B\bigr),
$
respectively. Consequently, \(X_t\) and \(Y_t\) are centered Gaussian Hilbert-space-valued stochastic processes.

Their covariance kernels are the operator-valued kernels
$$
\hat R_{XX}(t,s)
=
E\!\left[
|X_t\rangle\langle X_s|
\right],\; 
\hat R_{YY}(t,s)
=
E\!\left[
|Y_t\rangle\langle Y_s|
\right].
$$
A direct computation yields
$$
\hat R_{XX}(t,s)
=
\sum_{k=1}^{\infty}
\lambda_k\,
|u_k(t)\rangle\langle u_k(s)|,\; 
\hat R_{YY}(t,s)
=
\sum_{\ell=1}^{\infty}
\mu_\ell\,
|v_\ell(t)\rangle\langle v_\ell(s)|.
$$

The corresponding covariance operators on ${\cal H}_A$
 ${\cal H}_B$
are positive and trace class. Their eigenfunctions are
$\{u_k\}$
and
$\{v_\ell\}$,
with corresponding eigenvalues
$\{\lambda_k\}$
and
$\{\mu_\ell\}$.

Substituting the above expansions into DCM gives  
\[
C_\Delta
=
\sum_{k,\ell}
\sqrt{\lambda_k\mu_\ell}\,
\xi_k\eta_\ell\,
W_{k\ell}
\]
where
\[
W_{k\ell}
=
\frac{1}{\Delta}
\int_\Delta
u_k(t)\otimes v_\ell(t)\,dt.
\]
Thus the DCM temporal covariance, and therefore the resulting DCM state, is completely determined by the deterministic tensor family
$\{W_{k\ell}\}.$

\subsection{Temporal Schmidt-Sector Construction}

Let
\[
|\Psi\rangle
=
\sum_{r=1}^{m}
s_r\, e_r \otimes f_r
\]
be a normalized Schmidt decomposition of a pure state,
$s_r >0, \sum_{r=1}^{m} s_r^2 =1.$

Define
\[
S
=
\sum_{r=1}^{m}
s_r .
\]

Denote by $L^2(\Delta)$ space of square integrable complex-valued functions defined on the segment $\Delta.$ Consider its decomposition into direct sum of subspaces: 

\[
L^2(\Delta)
=
F_1
\oplus
F_2
\oplus
\cdots
\oplus
F_m ,
\]
where each subspace \(F_r\) is infinite-dimensional.

For every \(r\), choose an orthonormal basis
$\{f_r^{(k)}\}_{k\ge 1}$
 in the subspace $F_r.$ Thus
\[
\int_\Delta
f_r^{(k)}(t)
f_q^{(\ell)}(t)
\,dt
=
\delta_{rq}
\delta_{k\ell}.
\]

Define

\[
u_k(t)
=
\frac{1}{\sqrt S}
\sum_{r=1}^{m}
\sqrt{s_r}
\,f_r^{(k)}(t)
\,e_r ,\;  v_k(t)
=
\frac{1}{\sqrt S}
\sum_{r=1}^{m}
\sqrt{s_r}
\,f_r^{(k)}(t)
\,f_r .
\]

\begin{lemma}
The families
\[
\{u_k\}_{k\ge 1}
\subset
{\cal H}_A, \; \; \{v_k\}_{k\ge 1}
\subset
{\cal H}_B
\]
are orthonormal.
\end{lemma}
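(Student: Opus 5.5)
The plan is to compute the Gram matrix $\langle u_k,u_j\rangle_{{\cal H}_A}$ directly from the definition of the inner product on ${\cal H}_A=L^2(\Delta;H_A)$. Then we observe that the same computation applies verbatim to $\{v_k\}$, with $e_r$ replaced by $f_r$. The only structural inputs are two orthonormality relations. The first is orthonormality of the Schmidt vectors $\{e_r\}$ in $H_A$ (respectively $\{f_r\}$ in $H_B$). The second is orthogonality of the temporal functions $f_r^{(k)}$, both across different sectors $F_r\perp F_q$ and within each sector.

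Concretely, we substitute the definition of $u_k$ and expand by sesquilinearity:
\[
\langle u_k,u_j\rangle_{{\cal H}_A}
=\frac{1}{S}\sum_{r,q=1}^m \sqrt{s_r s_q}\,\langle e_r,e_q\rangle_{H_A}\int_\Delta \overline{f_r^{(k)}(t)}\,f_q^{(j)}(t)\,dt .
\]
The pointwise inner product $\langle u_k(t),u_j(t)\rangle_{H_A}$ factors as scalar functions times $\langle e_r,e_q\rangle$. Exchanging the finite sum with the integral is harmless. The factor $\langle e_r,e_q\rangle=\delta_{rq}$ collapses the double sum to $r=q$. Orthonormality of $\{f_r^{(k)}\}_k$ in $F_r$ then gives $\delta_{kj}$, so
\[
\langle u_k,u_j\rangle_{{\cal H}_A}=\frac{1}{S}\sum_{r=1}^m s_r\,\delta_{kj}=\delta_{kj},
\]
by the definition $S=\sum_r s_r$. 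This is exactly why the normalization $1/\sqrt S$ with $\sqrt{s_r}$ weights was chosen, rather than $s_r$ weights; I would stress this point. The vectors $u_k$ do lie in ${\cal H}_A$, since each is a finite combination of $L^2$ functions times fixed vectors.

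The only step requiring care is the orthogonality relation among the temporal functions. The paper writes $\int_\Delta f_r^{(k)} f_q^{(\ell)}\,dt=\delta_{rq}\delta_{k\ell}$ without a complex conjugate. For complex $L^2(\Delta)$, the correct relation is $\int_\Delta \overline{f_r^{(k)}}\,f_q^{(\ell)}\,dt=\delta_{rq}\delta_{k\ell}$. This follows from orthonormality within $F_r$ together with the direct-sum decomposition being orthogonal. I would state explicitly that $\oplus$ is meant as an orthogonal direct sum, since otherwise cross-sector terms $r\neq q$ need not vanish. In our computation those terms are already killed by $\langle e_r,e_q\rangle=0$, so only within-sector orthonormality is actually needed. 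I would remark on this, noting that cross-sector orthogonality becomes essential only later, for the tensors $W_{k\ell}$.
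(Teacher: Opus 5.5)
Your proposal is correct and is essentially the paper's proof. Like the paper, you expand $\langle u_k,u_\ell\rangle$ sesquilinearly, collapse the double sum using $\langle e_r,e_q\rangle=\delta_{rq}$ and the orthonormality of $\{f_r^{(k)}\}_k$ in $F_r$, and conclude with $\sum_r s_r=S$, noting that the case of $\{v_k\}$ is identical. Your side remarks are also accurate: the paper's proof itself uses the conjugate-linear $L^2$ pairing $\langle f_r^{(k)},f_q^{(\ell)}\rangle$ rather than the unconjugated integral displayed earlier, and cross-sector orthogonality is not needed for this lemma because $\langle e_r,e_q\rangle=0$ already removes those terms.
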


\begin{proof}
Using the orthogonality relations,
$$
\langle u_k,u_\ell\rangle
=
\frac{1}{S}
\sum_{r,q}
\sqrt{s_r s_q}
\,
\langle f_r^{(k)},f_q^{(\ell)}\rangle
\,
\langle e_r,e_q\rangle \\
=
\frac{1}{S}
\sum_r
s_r
\delta_{k\ell} \\
=
\delta_{k\ell}.
$$
The proof for the family \(\{v_k\}\) is identical.
\end{proof}

\paragraph{Coordinate representation of the Gaussian processes.}

Let
\[
\{\lambda_k\}_{k\ge1},
\; 
\{\mu_k\}_{k\ge1},
\]
be positive summable sequences and let
\[
\{\xi_k\}_{k\ge1},\; 
\{\eta_k\}_{k\ge1},
\]
be independent standard circular Gaussian random variables.

The Gaussian processes generated by the above orthonormal systems are

\[
X_t
=
\sum_{k=1}^{\infty}
\sqrt{\lambda_k}\,
\xi_k\,u_k(t),
\]

and

\[
Y_t
=
\sum_{k=1}^{\infty}
\sqrt{\mu_k}\,
\eta_k\,v_k(t).
\]

Substituting the definitions of \(u_k\) and \(v_k\) gives

\[
\begin{aligned}
X_t
&=
\frac1{\sqrt S}
\sum_{r=1}^{m}
\sqrt{s_r}
\left(
\sum_{k=1}^{\infty}
\sqrt{\lambda_k}\,
\xi_k
\,f_r^{(k)}(t)
\right)e_r,
\end{aligned}
\]

and

\[
\begin{aligned}
Y_t
&=
\frac1{\sqrt S}
\sum_{r=1}^{m}
\sqrt{s_r}
\left(
\sum_{k=1}^{\infty}
\sqrt{\mu_k}\,
\eta_k
\,f_r^{(k)}(t)
\right)f_r.
\end{aligned}
\]

Thus each Hilbert-space component of the processes occupies its own temporal
sector \(F_r\). The Schmidt coefficients \(s_r\) determine the relative weights
of the temporal sectors, while the random Fourier coefficients are generated by
the Gaussian variables \(\{\xi_k\}\) and \(\{\eta_k\}\).

```latex
\subsection{Expected Energies of the Gaussian Processes}
\label{energy}

The coordinate representation of the processes immediately allows one to
compute their instantaneous and expected energies.

Since

\[
X_t
=
\frac1{\sqrt S}
\sum_{r=1}^{m}
\sqrt{s_r}
\left(
\sum_{k=1}^{\infty}
\sqrt{\lambda_k}\,
\xi_k
f_r^{(k)}(t)
\right)e_r,
\]

and the vectors
\(\{e_r\}_{r=1}^{m}\)
form an orthonormal basis of \(H_A\),

\[
\begin{aligned}
\|X_t\|^2
&=
\left\langle X_t,X_t\right\rangle
\\
&=
\frac1S
\sum_{r=1}^{m}
s_r
\left|
\sum_{k=1}^{\infty}
\sqrt{\lambda_k}\,
\xi_k
f_r^{(k)}(t)
\right|^2.
\end{aligned}
\]

Similarly,

\[
\|Y_t\|^2
=
\frac1S
\sum_{r=1}^{m}
s_r
\left|
\sum_{k=1}^{\infty}
\sqrt{\mu_k}\,
\eta_k
f_r^{(k)}(t)
\right|^2.
\]

These quantities are random and describe the instantaneous energies of the two
Gaussian processes.

Taking expectations and using

\[
E(\xi_k\overline{\xi_\ell})
=
\delta_{k\ell},
\]

we obtain

\[
\begin{aligned}
E\|X_t\|^2
&=
\frac1S
\sum_{r=1}^{m}
s_r
E
\left|
\sum_{k=1}^{\infty}
\sqrt{\lambda_k}\,
\xi_k
f_r^{(k)}(t)
\right|^2
\\
&=
\frac1S
\sum_{r=1}^{m}
s_r
\sum_{k=1}^{\infty}
\lambda_k
|f_r^{(k)}(t)|^2.
\end{aligned}
\]

Analogously,

\[
E\|Y_t\|^2
=
\frac1S
\sum_{r=1}^{m}
s_r
\sum_{k=1}^{\infty}
\mu_k
|f_r^{(k)}(t)|^2.
\]

Integrating over the interval \(\Delta\) and using the orthonormality of the
systems \(\{f_r^{(k)}\}_{k\ge1}\),

\[
\int_\Delta
|f_r^{(k)}(t)|^2\,dt
=
1,
\]

yields

\[
\begin{aligned}
\int_\Delta
E\|X_t\|^2\,dt
&=
\frac1S
\sum_{r=1}^{m}
s_r
\sum_{k=1}^{\infty}
\lambda_k
\\
&=
\sum_{k=1}^{\infty}
\lambda_k,
\end{aligned}
\]

since

\[
\sum_{r=1}^{m}s_r=S.
\]

Similarly,

\[
\int_\Delta
E\|Y_t\|^2\,dt
=
\sum_{k=1}^{\infty}
\mu_k.
\]

Thus the total expected energies of the two Gaussian processes coincide with
the traces of their covariance operators,

\[
\boxed{
\int_\Delta
E\|X_t\|^2\,dt
=
\operatorname{Tr}(Q_X)
=
\sum_{k=1}^{\infty}\lambda_k,
}
\]

and

\[
\boxed{
\int_\Delta
E\|Y_t\|^2\,dt
=
\operatorname{Tr}(Q_Y)
=
\sum_{k=1}^{\infty}\mu_k.
}
\]

Moreover, the contribution of the \(r\)-th temporal sector to the total
expected energy equals

\[
E_r(X)
=
\frac{s_r}{S}
\sum_{k=1}^{\infty}\lambda_k,
\]

and

\[
E_r(Y)
=
\frac{s_r}{S}
\sum_{k=1}^{\infty}\mu_k.
\]

Hence the relative expected energy carried by the \(r\)-th temporal sector is

\[
\frac{E_r(X)}
{\sum_{q=1}^{m}E_q(X)}
=
\frac{s_r}{S},
\qquad
\frac{E_r(Y)}
{\sum_{q=1}^{m}E_q(Y)}
=
\frac{s_r}{S}.
\]

Therefore the  Schmidt decomposition coefficients \(s_r\) admit a direct stochastic interpretation:
they determine how the expected energy of each Gaussian process is distributed
among the mutually orthogonal temporal sectors \(F_r\).

\begin{remark}
One may ask whether it is necessary to use the same temporal basis
\[
\{f_r^{(k)}\}_{k\ge1}\subset F_r
\]
in the constructions of both families
\(\{u_k\}\) and \(\{v_k\}\).
Suppose instead that the second family is constructed from another collection
of functions
\[
\{g_r^{(k)}\}_{k\ge1}\subset F_r.
\]

Repeating the computation of the tensors \(W_{k\ell}\) yields

\[
W_{k\ell}
=
\frac{1}{S|\Delta|}
\sum_{r,q=1}^{m}
\sqrt{s_rs_q}
\left(
\int_\Delta
f_r^{(k)}(t)\,
g_q^{(\ell)}(t)\,dt
\right)
e_r\otimes f_q.
\]

Hence the identity

\[
W_{k\ell}
=
\frac{\delta_{k\ell}}{S|\Delta|}
\sum_{r=1}^{m}
s_r\,e_r\otimes f_r
\]

holds provided the two systems satisfy the biorthogonality relations

\[
\int_\Delta
f_r^{(k)}(t)\,
g_q^{(\ell)}(t)\,dt
=
\delta_{rq}\delta_{k\ell}.
\]

Thus, from the viewpoint of the tensor realization alone, biorthogonality is
sufficient.

However, the present construction requires the families
\(\{u_k\}\) and \(\{v_k\}\) to be Karhunen--Lo\`eve modes of Gaussian processes.
Consequently, both
\(\{f_r^{(k)}\}\) and
\(\{g_r^{(k)}\}\)
must be complete orthonormal systems of the corresponding temporal sectors
\(F_r\).
A standard result from Hilbert space theory states that the biorthogonal basis
of a complete orthonormal basis coincides with the basis itself, up to
unimodular phase factors. Therefore,

\[
g_r^{(k)}(t)
=
e^{i\theta_r^{(k)}}\,f_r^{(k)}(t),
\]

for suitable real numbers \(\theta_r^{(k)}\).

Consequently, apart from inessential phase factors (which may be absorbed into
the Schmidt vectors), the use of the same temporal basis in both Gaussian
processes is essentially forced by the Karhunen--Lo\`eve structure. The apparent
symmetry of the construction is therefore not an additional assumption but a
consequence of simultaneously requiring tensor realization and orthonormal
Karhunen--Loève expansions.
\end{remark}

\section{The Fundamental Tensor Identity}

The key observation is that the integrated tensor modes
\(W_{k\ell}\) are all collinear with the target state.

\begin{proposition}
\label{Proposition1}

For the above construction,

\[
W_{k\ell}
=
\frac{\delta_{k\ell}}
     {S|\Delta|}
|\Psi \rangle.
\]

\end{proposition}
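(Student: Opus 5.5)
The plan is a direct computation: substitute the definitions of $u_k$ and $v_\ell$ into
\[
W_{k\ell}=\frac{1}{|\Delta|}\int_\Delta u_k(t)\otimes v_\ell(t)\,dt,
\]
expand by bilinearity of the tensor product, and collapse the resulting double sum over temporal sectors using the orthogonality relations of the functions $f_r^{(k)}$.

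First, for fixed $t$, I will write the integrand using bilinearity of $\otimes$ and the fact that $f_r^{(k)}(t)$ and $f_q^{(\ell)}(t)$ are scalars:
\[
u_k(t)\otimes v_\ell(t)=\frac{1}{S}\sum_{r,q=1}^{m}\sqrt{s_r s_q}\,f_r^{(k)}(t)\,f_q^{(\ell)}(t)\;e_r\otimes f_q .
\]
This is a finite sum over $r,q$ of scalar functions times fixed vectors in $H_A\otimes H_B$. Integration over $\Delta$ therefore commutes with the sum. It acts only on the scalar coefficients, each of which is integrable because $f_r^{(k)},f_q^{(\ell)}\in L^2(\Delta)$ and Cauchy--Schwarz applies. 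This yields
\[
W_{k\ell}=\frac{1}{S|\Delta|}\sum_{r,q=1}^{m}\sqrt{s_r s_q}\left(\int_\Delta f_r^{(k)}(t)f_q^{(\ell)}(t)\,dt\right)e_r\otimes f_q ,
\]
which is the same formula as in the preceding Remark, with $g=f$.

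Second, I will apply the relation $\int_\Delta f_r^{(k)}f_q^{(\ell)}\,dt=\delta_{rq}\delta_{k\ell}$ stated in the construction. It kills all off-diagonal sector terms and leaves
\[
W_{k\ell}=\frac{\delta_{k\ell}}{S|\Delta|}\sum_{r=1}^{m}s_r\,e_r\otimes f_r=\frac{\delta_{k\ell}}{S|\Delta|}\,|\Psi\rangle ,
\]
by the Schmidt decomposition of $\Psi$. The normalization checks out: the two factors $1/\sqrt S$ combine to $1/S$, and $\sqrt{s_r}\sqrt{s_r}=s_r$ reproduces the Schmidt coefficients.

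The one point needing care, and the main obstacle, is that the integrand is bilinear: there is no complex conjugate in $\int f_r^{(k)}f_q^{(\ell)}\,dt$, whereas orthonormality in $L^2(\Delta)$ controls $\int f_r^{(k)}\overline{f_q^{(\ell)}}\,dt$. The two coincide when the bases $\{f_r^{(k)}\}$ are chosen real-valued, which is always possible: take real orthonormal bases of real subspaces whose complexifications are the $F_r$, for example sectors spanned by real trigonometric or Legendre-type functions. I will adopt this choice explicitly, noting that the relation as written in the construction presupposes it. Without it, one would have to replace $v_\ell$ by the construction using $\overline{f_r^{(k)}}$, in the spirit of the biorthogonality discussion in the preceding Remark. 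Once this convention is fixed, the rest is routine algebra.
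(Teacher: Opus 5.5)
Your proof is correct and follows the paper's argument: you substitute the mode definitions, expand bilinearly into $\frac{1}{S|\Delta|}\sum_{r,q}\sqrt{s_rs_q}\bigl(\int_\Delta f_r^{(k)}f_q^{(\ell)}\,dt\bigr)e_r\otimes f_q$, and collapse the sum with $\int_\Delta f_r^{(k)}f_q^{(\ell)}\,dt=\delta_{rq}\delta_{k\ell}$. Your remark about the missing complex conjugate is a genuine refinement that the paper passes over, but note that real orthonormal bases exist only when each sector $F_r$ is invariant under complex conjugation, so it constrains the choice of decomposition (as in the cosine/sine example) and not merely the choice of bases within an arbitrary given decomposition.
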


\begin{proof}

Using the definitions of \(u_k\) and \(v_\ell\),

\begin{align*}
u_k(t)\otimes v_\ell(t)
=
\frac{1}{S}
\sum_{r,q}
\sqrt{s_r s_q}
\,f_r^{(k)}(t)
\,f_q^{(\ell)}(t)
\,
e_r\otimes f_q .
\end{align*}

Therefore

\begin{align*}
W_{k\ell}
&=
\frac{1}{S|\Delta|}
\sum_{r,q}
\sqrt{s_r s_q}
\left(
\int_\Delta
f_r^{(k)}(t)
f_q^{(\ell)}(t)
\,dt
\right)
e_r\otimes f_q .
\end{align*}

Using

\[
\int_\Delta
f_r^{(k)}(t)
f_q^{(\ell)}(t)
\,dt
=
\delta_{rq}
\delta_{k\ell},
\]

we obtain

\begin{align*}
W_{k\ell}
&=
\frac{\delta_{k\ell}}
     {S|\Delta|}
\sum_r
s_r
\,e_r\otimes f_r.
\end{align*}

Since

\[
|\Psi \rangle
=
\sum_r
s_r
\,e_r\otimes f_r,
\]

the result follows.

\end{proof}

\section{Constructive Gaussian Realization}

In Section~8 we constructed centered circular Gaussian processes
\[
X_t
=
\sum_{k=1}^{\infty}
\sqrt{\lambda_k}\,
\xi_k\,u_k(t), \; Y_t
=
\sum_{\ell=1}^{\infty}
\sqrt{\mu_\ell}\,
\eta_\ell\,v_\ell(t),
\]
where
\[
\{u_k\}_{k\ge1}
\subset
{\cal H}_A,\; 
\{v_\ell\}_{\ell\ge1}
\subset
{\cal H}_B
\]
are orthonormal systems,
\[
\{\lambda_k\},\; 
\{\mu_\ell\}
\]
are positive summable sequences, and
\(\{\xi_k\}\), \(\{\eta_\ell\}\) are independent standard circular Gaussian random variables.

We first calculate the covariance operators of these processes. For a centered Hilbert-space-valued Gaussian process
$
X=\{X_t\}_{t\in\Delta},
$
its covariance operator
$
\hat T_X:{\cal H}_A\rightarrow {\cal H}_A
$
is defined by
\[
\hat T_X
=
E\bigl[\,|X\rangle\langle X|\,\bigr],
\]
that is,
\[
\langle \hat T_Xf,g\rangle
=
E\!\left[
\langle X,f\rangle
\overline{\langle X,g\rangle}
\right],
\qquad
f,g\in {\cal H}_A.
\]
Equivalently,
\[
(\hat T_Xf)(t)
=
\int_\Delta
\hat K_{XX}(t,s)f(s)\,ds,
\]
where
\[
\hat K_{XX}(t,s)
=
E\!\left[X_t\otimes X_s^{*}\right]
\]
is the covariance kernel.

Similarly one defines the covariance operator
\[
\hat T_Y
=
E\bigl[\,|Y\rangle\langle Y|\,\bigr]
\]
of the process \(Y\).

Using
\[
E(\xi_k\overline{\xi_\ell})
=
\delta_{k\ell},
\]
one immediately obtains
\[
\hat T_X
=
\sum_{k=1}^{\infty}
\lambda_k
\,|u_k\rangle\langle u_k|,
\]
that is,
\[
(\hat T_X f)(t)
=
\sum_{k=1}^{\infty}
\lambda_k
\langle f,u_k\rangle
u_k(t).
\]

Consequently,
\[
\hat K_{XX}(t,s)
=
\sum_{k=1}^{\infty}
\lambda_k
\,u_k(t)u_k(s)^*.
\]

Likewise,
\[
\hat T_Y
=
\sum_{\ell=1}^{\infty}
\mu_\ell
\,|v_\ell\rangle\langle v_\ell|,
\]
with covariance kernel
\[
\hat K_{YY}(t,s)
=
\sum_{\ell=1}^{\infty}
\mu_\ell
\,v_\ell(t)v_\ell(s)^*.
\]

Thus the covariance operators are diagonal in the orthonormal systems
\(\{u_k\}\) and \(\{v_\ell\}\), with eigenvalues
\(\{\lambda_k\}\) and \(\{\mu_\ell\}\), respectively.

In the previous sections, we constructed particular orthonormal systems associated with the Schmidt decomposition of a prescribed target state
$
|\Psi \rangle
=
\sum_{r=1}^{m}
s_r\,e_r\otimes f_r,
$
and proved that the corresponding deterministic tensors satisfy
\[
W_{k\ell}
=
\frac{1}{|\Delta|}
\int_\Delta
u_k(t)\otimes v_\ell(t)\,dt
=
\frac{\delta_{k\ell}}{S|\Delta|}
|\Psi \rangle.
\]
Substituting these tensors into the Gaussian reduction formula
\[
C_\Delta
=
\sum_{k,\ell}
\sqrt{\lambda_k\mu_\ell}\,
\xi_k\eta_\ell\,
W_{k\ell},
\]
yields
\[
C_\Delta
=
\frac{|\Psi \rangle}{S|\Delta|}
\sum_{k=1}^{\infty}
\sqrt{\lambda_k\mu_k}\,
\xi_k\eta_k.
\]

Define
\[
\alpha(\omega)
=
\frac{1}{S|\Delta|}
\sum_{k=1}^{\infty}
\sqrt{\lambda_k\mu_k}\,
\xi_k(\omega)\eta_k(\omega).
\]

Then
\[
C_\Delta(\omega)
=
\alpha(\omega)|\Psi \rangle.
\]

To apply Lemma \ref{Lemma1} it remains to verify that
\[
E|\alpha|^2<\infty.
\]

Since the Gaussian processes are independent,
\[
E\!\left[
\xi_k\eta_k
\overline{\xi_\ell\eta_\ell}
\right]
=
E[\xi_k\overline{\xi_\ell}]
\,
E[\eta_k\overline{\eta_\ell}]
=
\delta_{k\ell},
\]
and therefore
\[
\begin{aligned}
E|\alpha|^2
&=
\frac{1}{S^2|\Delta|^2}
\sum_{k,\ell}
\sqrt{\lambda_k\mu_k\lambda_\ell\mu_\ell}\,
E\!\left[
\xi_k\eta_k
\overline{\xi_\ell\eta_\ell}
\right] \\
&=
\frac{1}{S^2|\Delta|^2}
\sum_{k=1}^{\infty}
\lambda_k\mu_k.
\end{aligned}
\]

Since
\[
\sum_{k=1}^{\infty}\lambda_k<\infty,
\qquad
\sum_{k=1}^{\infty}\mu_k<\infty,
\]
both sequences are bounded. Hence
\[
\sum_{k=1}^{\infty}\lambda_k\mu_k
\le
\left(\sup_k\mu_k\right)
\sum_{k=1}^{\infty}\lambda_k
<
\infty,
\]
and consequently
\[
E|\alpha|^2<\infty.
\]
We have therefore established the following realization theorem.

\begin{theorem}[Constructive Gaussian Realization]
For every finite-dimensional pure state
\[
|\Psi \rangle
=
\sum_{r=1}^{m}
s_r\,e_r\otimes f_r,
\]
there exist centered circular Gaussian Hilbert-space-valued stochastic processes
\[
X_t:\Omega\rightarrow H_A,
\qquad
Y_t:\Omega\rightarrow H_B,
\]
such that
\[
\rho_\Delta
=
|\Psi \rangle\langle\Psi|.
\]
\end{theorem}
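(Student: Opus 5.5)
The proof assembles the ingredients already prepared in the preceding sections. Given the Schmidt decomposition $|\Psi\rangle=\sum_{r=1}^m s_r e_r\otimes f_r$ with $\sum_r s_r^2=1$, I will fix a decomposition $L^2(\Delta)=F_1\oplus\cdots\oplus F_m$ into infinite-dimensional mutually orthogonal subspaces. In each $F_r$ I will pick an orthonormal basis $\{f_r^{(k)}\}_{k\ge1}$. For concreteness, take $\Delta=[0,T]$, split the trigonometric or Haar basis into $m$ infinite subfamilies, and choose all $f_r^{(k)}$ \emph{real-valued}. Real-valuedness matters: the bilinear relation $\int_\Delta f_r^{(k)}f_q^{(\ell)}\,dt=\delta_{rq}\delta_{k\ell}$ used in Proposition~\ref{Proposition1} (without complex conjugation) then coincides with the Hilbert-space orthonormality. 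I will then define $u_k,v_k$ as in the Temporal Schmidt-Sector Construction, which the earlier lemma shows are orthonormal in $\mathcal H_A$ and $\mathcal H_B$. Next I choose any positive summable sequences $\lambda_k,\mu_k$ and independent standard circular Gaussian families $\{\xi_k\},\{\eta_k\}$, and define $X_t,Y_t$ by the Karhunen--Lo\`eve series.

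The second step is to check that $X,Y$ are centered circular Gaussian processes in the sense of the Definition. The series converge in $L^2(\Omega;\mathcal H_A)$ and $L^2(\Omega;\mathcal H_B)$ because $\sum\lambda_k<\infty$ and $\sum\mu_k<\infty$. Every finite collection $\langle h_j,X_{t_j}\rangle$ is an $L^2$-limit of finite linear combinations of the jointly circular Gaussian $\xi_k$. Joint Gaussianity, centering and circularity (invariance under $\xi_k\mapsto e^{i\theta}\xi_k$) are preserved under such limits. Pointwise evaluation in $t$ should be understood for a measurable version, or in the $L^2(\Delta)$ sense, which is all that $C_\Delta$ requires.

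The third step invokes Proposition~\ref{Proposition1}, $W_{k\ell}=\delta_{k\ell}|\Psi\rangle/(S|\Delta|)$, in the expansion $C_\Delta=\sum_{k,\ell}\sqrt{\lambda_k\mu_\ell}\,\xi_k\eta_\ell W_{k\ell}$. This gives $C_\Delta=\alpha|\Psi\rangle$ with $\alpha=(S|\Delta|)^{-1}\sum_k\sqrt{\lambda_k\mu_k}\,\xi_k\eta_k$. Interchanging the integral over $\Delta$ with the double series is justified by $L^2(\Omega)$-convergence and boundedness of $h\mapsto\int_\Delta h$. The computation preceding the theorem gives $E|\alpha|^2=(S|\Delta|)^{-2}\sum_k\lambda_k\mu_k<\infty$. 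Since all $\lambda_k,\mu_k>0$, this quantity is also strictly positive, so $\operatorname{Tr}\widehat C=E|\alpha|^2\|\Psi\|^2\neq0$ and the normalization is legitimate. Lemma~\ref{Lemma1} with $\|\Psi\|=1$ then yields $\rho_\Delta=|\Psi\rangle\langle\Psi|$.

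The main obstacle is not the algebra but two easily overlooked points. The first is the bilinear versus sesquilinear orthogonality in Proposition~\ref{Proposition1}; I handle it by insisting on real basis functions, or else by replacing $f_r^{(k)}$ with $\overline{f_r^{(k)}}$ in the definition of $v_k$. The second is the nondegeneracy $E|\alpha|^2>0$, which is needed for the trace normalization to make sense. The remaining issue, the rigorous interchange of the time integral with the infinite series, is a routine $L^2$ continuity argument.
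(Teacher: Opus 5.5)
Your proposal is correct and follows essentially the same route as the paper: the temporal Schmidt-sector modes $u_k,v_k$, Proposition~\ref{Proposition1} to collapse $C_\Delta$ to $\alpha(\omega)|\Psi\rangle$, the computation $E|\alpha|^2=(S|\Delta|)^{-2}\sum_k\lambda_k\mu_k<\infty$, and Lemma~\ref{Lemma1}. Your extra care fills in points the paper leaves implicit: choosing real-valued $f_r^{(k)}$ makes the bilinear relation used in Proposition~\ref{Proposition1} genuinely follow from orthonormality, and checking $E|\alpha|^2>0$ makes the trace normalization legitimate. One caution: for arbitrary summable $\lambda_k$, only the uniformly bounded trigonometric basis, not the Haar basis, guarantees that $X_t$ is defined for every $t$, as the Definition requires.
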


\begin{proof}
The preceding calculations show that
\[
C_\Delta(\omega)
=
\alpha(\omega)\Psi,
\]
where
\[
E|\alpha|^2<\infty.
\]
The conclusion therefore follows directly from Lemma~1.
\end{proof}

\begin{remark}
The Gaussian realization theorem reveals a structural feature of the DCM framework that differs substantially from the standard quantum-information intuition. In the present construction, one may choose the cross-covariance kernel to vanish identically,
\[
\hat R_{XY}(t,s)\equiv 0.
\]
For jointly Gaussian processes this implies that \(X_t\) and \(Y_t\) are independent.
Nevertheless, the resulting DCM state may be an arbitrary pure entangled state, including maximally entangled states such as Bell states. Thus the generation of entanglement in the DCM framework does not require nontrivial second-order cross-correlations between the underlying stochastic processes.

Instead, the entanglement is encoded in the temporal geometry of the deterministic tensors
$W_{k\ell}$ which enter the time-averaged amplitude
\[
C_\Delta
=
\frac{1}{|\Delta|}
\int_\Delta
X_t\otimes Y_t\,dt.
\]

The DCM construction therefore demonstrates that strong entanglement may emerge entirely from fourth-order temporal coherence, even when all ordinary second-order cross-covariances vanish.
\end{remark}

We can formulate this property of DCM  as mathematical statement:

\begin{corollary}[Entanglement without cross-covariance]
For every finite-dimensional pure state
\[
|\Psi \langle \in H_A\otimes H_B,
\]
there exist independent Gaussian processes \(X_t\) and \(Y_t\) such that the corresponding DCM state satisfies
\[
\hat \rho
=
|\Psi\rangle\langle\Psi|.
\]

In particular, there exist independent Gaussian processes whose DCM state is maximally entangled.
\end{corollary}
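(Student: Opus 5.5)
The corollary should follow from the Constructive Gaussian Realization theorem once we check that the processes built there are independent. I would not compute anything new. Instead I would re-examine the construction of Sections on the temporal Schmidt-sector and verify three things: that the pair $(X,Y)$ is jointly Gaussian, that it is independent, and that every ingredient used there can actually be realized.

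First, fix the target $|\Psi\rangle=\sum_{r=1}^m s_r\, e_r\otimes f_r$ and realize the decomposition $L^2(\Delta)=F_1\oplus\cdots\oplus F_m$ concretely. I would split $\Delta$ into $m$ disjoint subintervals $\Delta_r$ and take $F_r=L^2(\Delta_r)$, extended by zero, which is infinite-dimensional. In each $F_r$ I would choose a \emph{real-valued} orthonormal basis $\{f_r^{(k)}\}_{k\ge1}$, for instance shifted and rescaled Legendre polynomials on $\Delta_r$. This choice matters. The identity $\int_\Delta f_r^{(k)} f_q^{(\ell)}\,dt=\delta_{rq}\delta_{k\ell}$ used in Proposition~\ref{Proposition1} contains no complex conjugate, whereas orthonormality in the complex space $L^2(\Delta)$ does. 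For real functions the bilinear and sesquilinear forms coincide, so both the orthonormality lemma and Proposition~\ref{Proposition1} hold simultaneously.

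Next, take two independent i.i.d.\ families $\{\xi_k\}$ and $\{\eta_k\}$ of standard circular complex Gaussians on a product probability space, and any positive summable $\{\lambda_k\},\{\mu_k\}$, say $\lambda_k=\mu_k=2^{-k}$. Then the concatenated family $(\xi_k,\eta_\ell)$ is jointly circular Gaussian. Because $X_t$ and $Y_t$ are $L^2$-limits of linear combinations of these variables, the process $(X_t,Y_t)$ is jointly centered circular Gaussian. Moreover $X$ is measurable with respect to $\sigma(\xi)$ and $Y$ with respect to $\sigma(\eta)$, so $X$ and $Y$ are independent. Equivalently, $\hat R_{XY}\equiv0$ because $\mathbb E[\xi_k\overline{\eta_\ell}]=0$, and then one invokes the remark that jointly Gaussian processes with zero cross-covariance are independent.

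Then I would apply Proposition~\ref{Proposition1} together with the computation preceding the realization theorem. This gives $C_\Delta=\alpha|\Psi\rangle$ with $\mathbb E|\alpha|^2=\sum_k\lambda_k\mu_k/(S|\Delta|)^2<\infty$. This moment bound uses exactly the independence just established. Lemma~\ref{Lemma1} then yields $\hat\rho=|\Psi\rangle\langle\Psi|$. For the maximally entangled clause, specialize to $m=\min(\dim H_A,\dim H_B)$ and $s_r=1/\sqrt m$; this gives for example the Bell state $(|00\rangle+|11\rangle)/\sqrt2$ when $m=2$.

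The only genuine obstacle I expect is the conjugation subtlety in the second paragraph: with complex bases the argument would silently break. The real-basis choice resolves it, and it is consistent with the KL requirement discussed in the preceding Remark.
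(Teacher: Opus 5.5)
Your proposal is correct and follows the paper's route. The corollary is the Constructive Gaussian Realization theorem together with the observation that the construction uses independent families $\{\xi_k\}$, $\{\eta_k\}$; hence $\hat R_{XY}\equiv 0$, and the jointly Gaussian processes $X$, $Y$ are independent. Your use of real-valued sector bases (e.g.\ Legendre polynomials on disjoint subintervals $\Delta_r$) is a worthwhile repair that the paper omits. Proposition~\ref{Proposition1} relies on the bilinear identity $\int_\Delta f_r^{(k)} f_q^{(\ell)}\,dt=\delta_{rq}\delta_{k\ell}$, and this follows from orthonormality in $L^2(\Delta)$ only when the functions are real.
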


\begin{proof}
Choose
\[
\hat R_{XY}(t,s)\equiv 0.
\]
For jointly Gaussian processes this implies independence. By Theorem~\ref{thm:GaussianRealization}, the realization depends only on the deterministic tensor family
\[
W_{k\ell}
=
\frac{1}{|\Delta|}
\int_\Delta
u_k(t)\otimes v_\ell(t)\,dt
\]
and not on the cross-covariance kernel. Therefore the resulting DCM state remains
\[
\hat \rho
=
|\Psi\rangle\langle\Psi|.
\]
\end{proof}

\subsection{A Concrete Two-Dimensional Example}
\label{example}

We illustrate the realization theorem by an explicit construction.
Let
$\Delta=[0,2\pi],
$
and let
\[
H_A=\operatorname{span}\{e_1,e_2\},\; H_B=\operatorname{span}\{f_1,f_2\},
\]
where both bases are orthonormal.
We decompose the temporal Hilbert space as
$
L^2([0,2\pi])
=
F_1\oplus F_2,
$
where
\[
F_1
=
\overline{\operatorname{span}}
\left\{ \frac{1}{\sqrt{2\pi}},
\frac{\cos(kt)}{\sqrt{\pi}}
:
k\ge1
\right\},\;
F_2
=
\overline{\operatorname{span}}
\left\{
\frac{\sin(kt)}{\sqrt{\pi}}
:
k\ge1
\right\}.
\]
Let
\[
|\Psi \rangle
=
s_1e_1\otimes f_1
+
s_2e_2\otimes f_2, \; s_1,s_2>0,
s_1^2+s_2^2=1,
\]
and define
$S=s_1+s_2.$
The Karhunen--Lo\`eve modes are chosen as
\[
u_k(t)
=
\frac1{\sqrt S}
\left(
\sqrt{s_1}
\frac{\cos(kt)}{\sqrt{\pi}}
e_1
+
\sqrt{s_2}
\frac{\sin(kt)}{\sqrt{\pi}}
e_2
\right),
\;
v_k(t)
=
\frac1{\sqrt S}
\left(
\sqrt{s_1}
\frac{\cos(kt)}{\sqrt{\pi}}
f_1
+
\sqrt{s_2}
\frac{\sin(kt)}{\sqrt{\pi}}
f_2
\right).
\]
The orthogonality of the trigonometric system implies
that both families are orthonormal.

Let
$
\{\xi_k\}_{k\ge1}, \{\eta_k\}_{k\ge1},
$
be independent standard circular Gaussian random variables and let
$\{\lambda_k\},
\qquad
\{\mu_k\},$
be positive summable sequences. The corresponding Gaussian processes are presented in the coordinate form as
\begin{equation}
\label{Xt}
\begin{aligned}
X_t
&=
\frac{\sqrt{s_1}}{\sqrt{\pi S}}
\left(
\sum_{k=1}^{\infty}
\sqrt{\lambda_k}\,
\xi_k
\cos(kt)
\right)e_1
+
\frac{\sqrt{s_2}}{\sqrt{\pi S}}
\left(
\sum_{k=1}^{\infty}
\sqrt{\lambda_k}\,
\xi_k
\sin(kt)
\right)e_2,
\end{aligned}
\end{equation}
and
\begin{equation}
\label{yt}
\begin{aligned}
Y_t
&=
\frac{\sqrt{s_1}}{\sqrt{\pi S}}
\left(
\sum_{k=1}^{\infty}
\sqrt{\mu_k}\,
\eta_k
\cos(kt)
\right)f_1
+
\frac{\sqrt{s_2}}{\sqrt{\pi S}}
\left(
\sum_{k=1}^{\infty}
\sqrt{\mu_k}\,
\eta_k
\sin(kt)
\right)f_2.
\end{aligned}
\end{equation}

Thus the first Hilbert-space component of each process occupies the cosine
sector \(F_1\), while the second occupies the sine sector \(F_2\).
The deterministic tensors are
$W_{k\ell}
=
\frac1{2\pi}
\int_0^{2\pi}
u_k(t)\otimes v_\ell(t)\,dt.
$
Substituting the expressions for \(u_k\) and \(v_\ell\) gives
\[
W_{k\ell}
=
\frac{\delta_{k\ell}}{2\pi S}
\left(
s_1e_1\otimes f_1
+
s_2e_2\otimes f_2
\right)
=
\frac{\delta_{k\ell}}{2\pi S}|\Psi \rangle.
\]

\section{Concurrence and temporal energy distribution}

We start with the two qubit case considered in section \ref{example}. To be more concrete we continue the example presented in setcion \ref{example}.

The expected energies computed above show that, for the present realization,
the cosine and sine temporal sectors carry relative energies
\[
p_1:=\frac{E_1(X)}{E_1(X)+E_2(X)}=\frac{s_1}{s_1+s_2},
\qquad
p_2:=\frac{E_2(X)}{E_1(X)+E_2(X)}=\frac{s_2}{s_1+s_2},
\]
with \(p_1+p_2=1\). The same relations hold for the process \(Y\).

Thus the realization theorem associates to the Schmidt coefficients
\((s_1,s_2)\) a temporal energy distribution \((p_1,p_2)\) given by
\[
s_1=\frac{p_1}{\sqrt{p_1^2+p_2^2}},
\qquad
s_2=\frac{p_2}{\sqrt{p_1^2+p_2^2}}.
\]
Substituting into the concurrence formula for a pure two-qubit state,
\[
\mathcal C(\Psi )=2s_1s_2,
\]
we obtain
\[
\mathcal C(\Psi)
=
\frac{2p_1p_2}{p_1^2+p_2^2}.
\]

Hence, within the present DCM realization, concurrence is determined by the
distribution of expected stochastic energy between the two temporal sectors.
It vanishes when one sector carries all the energy, and it is maximal when the
energy is equally distributed:
\[
p_1=p_2=\frac12
\quad\Longrightarrow\quad
\mathcal C(\Psi)=1.
\]

Therefore, in this construction, entanglement is controlled not directly by the
quadratic weights \(s_1^2,s_2^2\), but by the relative sector energies
\[
p_r=\frac{E_r(X)}{E_1(X)+E_2(X)}
=\frac{E_r(Y)}{E_1(Y)+E_2(Y)},
\qquad r=1,2.
\]
The Schmidt coefficients are recovered from these energy fractions by the above
normalization formulas.

Now we consider the general multidimensional case. 
Let
\[
|\Psi \rangle=\sum_{r=1}^m s_r\,e_r\otimes f_r,
\qquad s_r\ge 0,
\qquad \sum_{r=1}^m s_r^2=1,
\]
be the Schmidt decomposition of the target pure state.
A standard concurrence-type entanglement measure for pure bipartite states is
\[
\mathcal C(\Psi)
=
\sqrt{2\bigl(1-\operatorname{Tr} \hat \rho_A^2\bigr)},
\]
where
\[
\hat \rho_A=\operatorname{Tr}_{H_B}|\Psi\rangle\langle\Psi|
      =\sum_{r=1}^m s_r^2\,|e_r\rangle\langle e_r|.
\]
Hence
\[
\mathcal C(\Psi)
=
\sqrt{2\left(1-\sum_{r=1}^m s_r^4\right)}.
\]

For the Gaussian realization constructed above, the expected energies of the temporal sectors satisfy
\[
E_r(X)\propto s_r,
\qquad
E_r(Y)\propto s_r,
\qquad r=1,\dots,m.
\]
Therefore the normalized sector energies are
\[
p_r
:=
\frac{E_r(X)}{\sum_{q=1}^m E_q(X)}
=
\frac{E_r(Y)}{\sum_{q=1}^m E_q(Y)}
=
\frac{s_r}{\sum_{q=1}^m s_q}.
\]
Writing
\[
S:=\sum_{q=1}^m s_q,
\]
we have \(s_r=Sp_r\). Since \(\sum_{r=1}^m s_r^2=1\), it follows that
\[
1=\sum_{r=1}^m s_r^2=S^2\sum_{r=1}^m p_r^2,
\]
and therefore
\[
S=\frac{1}{\sqrt{\sum_{r=1}^m p_r^2}}.
\]
Consequently,
\[
s_r=\frac{p_r}{\sqrt{\sum_{q=1}^m p_q^2}},
\qquad r=1,\dots,m.
\]

Substituting into the concurrence formula yields
\[
\mathcal C(\Psi)
=
\sqrt{
2\left(
1-
\frac{\sum_{r=1}^m p_r^4}
     {\left(\sum_{q=1}^m p_q^2\right)^2}
\right)
}.
\]

Thus, in the present DCM realization, the entanglement of the target pure state
is completely determined by the distribution of expected stochastic energy among
the temporal sectors. The Schmidt coefficients are recovered from the sector
energy fractions \(\{p_r\}_{r=1}^m\) by the normalization formula
\[
s_r=\frac{p_r}{\sqrt{\sum_{q=1}^m p_q^2}},
\]
and the concurrence is then given by the above expression.

In the two-dimensional case \(m=2\), this reduces to
\[
\mathcal C(\Psi)
=
\frac{2p_1p_2}{p_1^2+p_2^2},
\]
which coincides with the formula obtained above.

\section{Recovery from the Gaussian Reduction Theorem}

The realization theorem may also be derived directly from the
Gaussian Reduction Theorem established in Section~\ref{sec:gaussian}.
Consider the special case $
\hat R_{XY}(t,s)=0.$
Then the exchange contribution
$
\mathcal E(t,s)
$
vanishes identically and the two-time kernel reduces to
\[
\hat K(t,s)
=
\hat R_{XX}(t,s)\otimes \hat R_{YY}(t,s).
\]
Using the expansions introduced above,
$
\hat R_{XX}(t,s)
=
\sum_k
\lambda_k
\,|u_k(t)\rangle
\langle u_k(s)|,
$
and
$
\hat R_{YY}(t,s)
=
\sum_\ell
\mu_\ell
\,|v_\ell(t)\rangle
\langle v_\ell(s)|,
$
we obtain

\[
\hat K(t,s)
=
\sum_{k,\ell}
\lambda_k\mu_\ell
\,
|u_k(t)\otimes v_\ell(t)\rangle
\langle
u_k(s)\otimes v_\ell(s)
|.
\]
Integrating over the observation interval and using the definition
of the DCM density operator yields
$
C
=
\sum_{k,\ell}
\lambda_k\mu_\ell
\,|W_{k\ell}\rangle
\langle W_{k\ell}|,
$
where
$
W_{k\ell}
=
\frac{1}{|\Delta|}
\int_\Delta
u_k(t)\otimes v_\ell(t)
\,dt.
$
By Proposition \ref{Proposition1},
$
W_{k\ell}
=
\frac{\delta_{k\ell}}
     {S|\Delta|}
|\Psi \rangle .
$
Substitution gives
$
C
=
\gamma
\,|\Psi\rangle
\langle\Psi|.$

Thus the Gaussian realization theorem follows directly from the
Gaussian Reduction Theorem.

Surprisingly  the exchange term \(\mathcal E\) is not required for
the generation of entangled states. In the present construction,
entanglement is generated entirely by the temporal geometry encoded
in the tensors \(W_{k\ell}\).

\section{Towards a General Theory of Gaussian DCM States}

The Gaussian realization theorem established in the previous section corresponds to a highly special
situation.
The temporal modes are arranged so that the deterministic tensors
\[
W_{k\ell}
=
\frac{1}{|\Delta|}
\int_\Delta
u_k(t)\otimes v_\ell(t)\,dt
\]
satisfy
\[
W_{k\ell}
=
\delta_{k\ell}\,c\,|\Psi \rangle.
\]
Consequently,
\[
C_\Delta(\omega)
=
\alpha(\omega)|\Psi \rangle,
\]
and the resulting DCM state is pure.

The Gaussian Reduction Theorem suggests a much broader viewpoint.
For independent Gaussian processes,
\[
C
=
\sum_{k,\ell}
\lambda_k\mu_\ell
|W_{k\ell}\rangle\langle W_{k\ell}|,
\]
so that the resulting DCM state is completely determined by the family
\[
\mathcal W
=
\{W_{k\ell}\}.
\]

Thus the fundamental object of Gaussian DCM theory is not the covariance kernel itself but rather
the geometry of the temporal tensor family.

This naturally leads to the following realizability problem.

\medskip

\noindent
\textbf{Realizability Problem.}
Characterize all tensor families
\[
\{W_{k\ell}\}\subset H_A\otimes H_B
\]
that admit a representation
\[
W_{k\ell}
=
\frac1{|\Delta|}
\int_\Delta
u_k(t)\otimes v_\ell(t)\,dt,
\]
where
\[
\{u_k\}\subset {\cal H}_A,
\qquad
\{v_\ell\}\subset {\cal H}_B
\]
are orthonormal systems.

The realization theorem provides one highly nontrivial example of such a family.
Different temporal organizations produce different tensor geometries and therefore different Gaussian
DCM states.

Several elementary structural observations follow immediately.

\paragraph{Pure states.}
The operator
\[
C
=
\sum_{k,\ell}
\lambda_k\mu_\ell
|W_{k\ell}\rangle\langle W_{k\ell}|
\]
has rank one if and only if all nonzero tensors are collinear,
\[
W_{k\ell}
=
c_{k\ell}|\Psi \rangle.
\]
After normalization,
\[
\hat \rho
=
\frac{|\Psi\rangle\langle\Psi|}
{\|\Psi\|^2}.
\]
The realization theorem established above is precisely of this type.

\paragraph{Separable states.}
Suppose every tensor factors,
\[
W_{k\ell}
=
a_{k\ell}\otimes b_{k\ell}.
\]
Then
\[
|W_{k\ell}\rangle\langle W_{k\ell}|
=
|a_{k\ell}\rangle\langle a_{k\ell}|
\otimes
|b_{k\ell}\rangle\langle b_{k\ell}|,
\]
and therefore
\[
\widehat C
=
\sum_{k,\ell}
\lambda_k\mu_\ell
\left(
|a_{k\ell}\rangle\langle a_{k\ell}|
\right)
\otimes
\left(
|b_{k\ell}\rangle\langle b_{k\ell}|
\right).
\]
Hence the normalized DCM state is separable.

Outside these two extreme situations, the geometry of the family
\(\{W_{k\ell}\}\) becomes essential.
The remainder of this section develops a general framework for analyzing entanglement through the
structure of these temporal tensors.

\subsection{Common Schmidt Decompositions}

A particularly tractable situation occurs when all tensors \(W_{k\ell}\) share a common Schmidt basis.

\begin{definition}
The family \(\{W_{k\ell}\}\) is said to admit a common Schmidt decomposition if there exist orthonormal systems
\[
\{e_r\}_{r=1}^{R}\subset H_A,
\qquad
\{f_r\}_{r=1}^{R}\subset H_B,
\]
such that
\[
W_{k\ell}
=
\sum_{r=1}^{R}
a_r^{(k\ell)}
\,e_r\otimes f_r
\]
for all pairs \((k,\ell)\).
\end{definition}

The common Schmidt decomposition condition for the family \(\{W_{k\ell}\}\)
may also be reformulated in operator terms by introducing the associated
Hilbert--Schmidt operators
\[
\hat M_{k\ell}:=\frac1{|\Delta|}\int_\Delta |u_k(t)\rangle\langle v_\ell(t)|\,dt.
\]
It has the form:
\[
\hat M_{k\ell}
=
\sum_{r=1}^{R}
a_r^{(k\ell)}
\,|e_r\rangle\langle f_r|.
\]

The coefficients are
\[
a_r^{(k\ell)}
=
\langle e_r,\hat M_{k\ell}f_r\rangle
=
\frac{1}{|\Delta|}
\int_\Delta
\langle e_r,u_k(t)\rangle
\langle v_\ell(t),f_r\rangle
\,dt.
\]

Thus the Schmidt coefficients are obtained directly from temporal overlaps of the mode functions.

Under the standard tensor--operator correspondence, the family
\(\{W_{k\ell}\}\subset H_A\otimes H_B\) admits a common Schmidt decomposition
if and only if the operators \(\{\hat M_{k\ell}\}\) admit a simultaneous singular-value
decomposition. We shall not pursue this operator-theoretic characterization here,
since for the Gaussian constructions considered below a more transparent sufficient
condition is given by the temporal sector decomposition introduced in
Proposition~\ref{Tsector}.

\subsection{The Schmidt-Coherence Matrix}

Assume from now on that a common Schmidt decomposition exists.

Define
\[
\Gamma_{rs}
=
\sum_{k,\ell}
\lambda_k\mu_\ell
\,a_r^{(k\ell)}
\overline{a_s^{(k\ell)}}.
\]

The matrix
\[
\Gamma
=
(\Gamma_{rs})_{r,s}
\]
will be called the Schmidt-coherence matrix.

It is positive semidefinite. Indeed,
\[
\Gamma
=
AA^{*},
\]
where the columns of \(A\) are
\[
\sqrt{\lambda_k\mu_\ell}
\left(
a_1^{(k\ell)},
\dots,
a_R^{(k\ell)}
\right)^T.
\]

Substituting the Schmidt expansions into the Gaussian reduction formula yields
\[
\widehat C
=
\sum_{r,s}
\Gamma_{rs}
\,|e_r\otimes f_r\rangle
\langle e_s\otimes f_s|.
\]

Thus the DCM covariance operator is completely determined by the Schmidt-coherence matrix.

\subsection{Entanglement Criterion}

We now derive a sufficient condition for entanglement.

\begin{proposition}[Schmidt-coherence criterion]
\label{cr}
Assume that the family \(\{W_{k\ell}\}\) admits a common Schmidt decomposition.

If
\[
\Gamma_{rs}\neq0
\]
for some \(r\neq s\),
then the normalized DCM state
\[
\hat \rho_C
=
\frac{\widehat C}{\operatorname{Tr}(\widehat C)}
\]
is entangled.
\end{proposition}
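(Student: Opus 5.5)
The plan is to use the Peres--Horodecki (PPT) criterion. Every separable state has a positive semidefinite partial transpose, so it suffices to show that the partial transpose of $\widehat C$ has a negative eigenvalue. Dividing by $\operatorname{Tr}\widehat C$ does not change this conclusion.

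First I make sure the normalization makes sense. Since $\Gamma=AA^*$ is positive semidefinite, we have $\Gamma_{rr}\Gamma_{ss}\ge|\Gamma_{rs}|^2>0$. Hence $\operatorname{Tr}\widehat C=\sum_r\Gamma_{rr}>0$, using $\|e_r\otimes f_r\|=1$, and $\hat\rho_C$ is a well-defined density operator.

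Next I use the representation derived above,
\[
\widehat C=\sum_{r,s}\Gamma_{rs}\,|e_r\otimes f_r\rangle\langle e_s\otimes f_s|
=\sum_{r,s}\Gamma_{rs}\,|e_r\rangle\langle e_s|\otimes|f_r\rangle\langle f_s| .
\]
I extend $\{f_r\}$ to an orthonormal basis of $H_B$ and take the transpose on $H_B$ with respect to that basis. This maps $|f_r\rangle\langle f_s|$ to $|f_s\rangle\langle f_r|$, so
\[
\widehat C^{T_B}=\sum_{r,s}\Gamma_{rs}\,|e_r\otimes f_s\rangle\langle e_s\otimes f_r| .
\]

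Now fix a pair $r\neq s$ with $\Gamma_{rs}\neq0$ and consider the two-dimensional subspace spanned by the orthonormal vectors $e_r\otimes f_s$ and $e_s\otimes f_r$. The key bookkeeping is to check which terms of $\widehat C^{T_B}$ act nontrivially on this subspace.

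\begin{itemize}
\item The diagonal terms $r'=s'$ live on the vectors $e_{r'}\otimes f_{r'}$. These are orthogonal to both spanning vectors, because $\{e_r\}$ and $\{f_r\}$ are orthonormal.
\item A term with $(r',s')$ maps $e_{s'}\otimes f_{r'}$ to $e_{r'}\otimes f_{s'}$. So it touches the subspace only when $(r',s')\in\{(r,s),(s,r)\}$.
\end{itemize}

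Consequently, the compression of $\widehat C^{T_B}$ to this subspace is the matrix
\[
\begin{pmatrix}0&\Gamma_{rs}\\ \overline{\Gamma_{rs}}&0\end{pmatrix},
\]
where $\Gamma_{sr}=\overline{\Gamma_{rs}}$ by Hermiticity of $\Gamma$. This matrix has eigenvalues $\pm|\Gamma_{rs}|$. Taking the corresponding eigenvector $\phi$, we get $\langle\phi,\widehat C^{T_B}\phi\rangle=-|\Gamma_{rs}|<0$.

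Therefore $\hat\rho_C^{T_B}$ is not positive, and $\hat\rho_C$ is entangled by the Peres criterion.

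The step needing the most care is the verification that no other term of $\widehat C^{T_B}$ leaks into the chosen two-dimensional subspace. This relies entirely on the orthonormality of the common Schmidt systems $\{e_r\}$ and $\{f_r\}$ in the definition of a common Schmidt decomposition.
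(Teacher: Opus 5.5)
Your proposal is correct and follows the paper's own argument: partial transpose on $H_B$, restriction to $\operatorname{span}\{e_r\otimes f_s,\,e_s\otimes f_r\}$, eigenvalues $\pm|\Gamma_{rs}|$, then the Peres criterion. You also check two points the paper leaves implicit, namely that $\operatorname{Tr}\widehat C>0$ and that the transpose is taken in a basis extending $\{f_r\}$.
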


\begin{proof}
Taking the partial transpose on the second subsystem gives
\[
\hat C^\Gamma
=
\sum_{r,s}
\Gamma_{rs}
\,|e_r\otimes f_s\rangle
\langle e_s\otimes f_r|.
\]

Fix \(r\neq s\). The subspace
\[
\mathcal H_{rs}
=
\operatorname{span}
\{
e_r\otimes f_s,
e_s\otimes f_r
\}
\]
is invariant under \(C^\Gamma\).

Restricted to this subspace,
\(\hat C^\Gamma\) is represented by
\[
\begin{pmatrix}
0 & \Gamma_{rs}\\
\overline{\Gamma_{rs}} & 0
\end{pmatrix}.
\]

Its eigenvalues are
\[
\pm |\Gamma_{rs}|.
\]

Hence, if
\[
\Gamma_{rs}\neq0,
\]
the operator \(C^\Gamma\) possesses a negative eigenvalue and therefore is not positive.

Since positivity of the partial transpose is necessary for separability, the state
$
\hat \rho_C
=
\frac{\widehat C}{\operatorname{Tr}(\widehat C)}
$
cannot be separable.
\end{proof}

\subsection{Relation to the realization theorem}

The realization theorem obtained earlier corresponds to a particularly simple
instance of the common Schmidt decomposition framework. Let
\[
|\Psi \rangle=\sum_{r=1}^m s_r\,e_r\otimes f_r,
\qquad s_r\ge 0,\qquad \sum_{r=1}^m s_r^2=1,
\]
and let \(X,Y\) be the Gaussian processes constructed in the realization theorem.
Then the associated temporal tensors satisfy
\[
W_{k\ell}
=
\delta_{k\ell}\,\frac{1}{S|\Delta|}\,|\Psi \rangle,
\qquad
S:=\sum_{r=1}^m s_r.
\]
Equivalently,
\[
W_{k\ell}
=
\delta_{k\ell}\,c\sum_{r=1}^m s_r\,e_r\otimes f_r,
\qquad
c:=\frac{1}{S|\Delta|}.
\]
Thus the family \(\{W_{k\ell}\}\) has a common Schmidt decomposition with
coefficients
\[
a_r^{(k\ell)}=\delta_{k\ell}\,c\,s_r.
\]

The corresponding Schmidt-coherence matrix is therefore
\[
\Gamma_{rs}
=
\sum_{k,\ell}\lambda_k\mu_\ell\,
a_r^{(k\ell)}\overline{a_s^{(k\ell)}}
=
|c|^2\sum_k\lambda_k\mu_k\,s_rs_s.
\]
Hence
\[
\Gamma_{rs}=\beta\,s_rs_s,
\qquad
\beta:=|c|^2\sum_k\lambda_k\mu_k.
\]
In particular, \(\Gamma\) has rank one. Substituting into the general formula
for \(C\), we obtain
\[
C
=
\sum_{r,s}\Gamma_{rs}\,
|e_r\otimes f_r\rangle\langle e_s\otimes f_s|
=
\beta\sum_{r,s}s_rs_s\,
|e_r\otimes f_r\rangle\langle e_s\otimes f_s|
=
\beta\,|\Psi\rangle\langle\Psi|.
\]
After normalization, this yields
\[
\rho_C=|\Psi\rangle\langle\Psi|,
\]
which is exactly the conclusion of the realization theorem.

\subsection{Open Problems}

The preceding discussion suggests several natural problems.

\begin{enumerate}
\item Characterize all tensor families
\[
\{W_{k\ell}\}
\]
that admit a common Schmidt decomposition.

\item Determine temporal conditions on the Gaussian modes
\[
u_k(t),
\qquad
v_\ell(t),
\]
that imply simultaneous singular-value diagonalizability of the operators \(\hat M_{k\ell}\).

\item Develop entanglement criteria that do not require a common Schmidt decomposition.

\item Characterize all tensor families realizable in the form
\[
W_{k\ell}
=
\frac{1}{|\Delta|}
\int_\Delta
u_k(t)\otimes v_\ell(t)\,dt.
\]
\end{enumerate}

These problems point toward a general geometric theory of Gaussian DCM states beyond the pure-state realization theorem established in the present work.

\subsection{A Sufficient Condition for a Common Schmidt Decomposition}

The common Schmidt decomposition assumption admits a natural sufficient condition
expressed directly in terms of the temporal mode functions. The basic idea is to
assign different Schmidt directions to pairwise orthogonal temporal sectors.
Then all mixed sector contributions vanish after time integration, and the
resulting family \(\{W_{k\ell}\}\) acquires a common Schmidt decomposition.

\begin{proposition}[Temporal sector decomposition]
\label{Tsector}
Assume that the temporal Hilbert space admits an orthogonal decomposition
\[
L^2(\Delta)
=
\left(\bigoplus_{r=1}^{R} F_r\right)\oplus F_\perp,
\]
where \(F_1,\dots,F_R\subset L^2(\Delta)\) are pairwise orthogonal closed
subspaces and \(F_\perp\) is their orthogonal complement.

Let
\[
\{e_r\}_{r=1}^{R}\subset H_A,
\qquad
\{f_r\}_{r=1}^{R}\subset H_B
\]
be orthonormal systems, and suppose that the Gaussian modes admit representations
\[
u_k(t)=\sum_{r=1}^{R}\phi_{kr}(t)e_r,
\qquad
v_\ell(t)=\sum_{r=1}^{R}\psi_{\ell r}(t)f_r,
\]
where
\[
\phi_{kr}\in F_r,
\qquad
\psi_{\ell r}\in F_r
\qquad
(r=1,\dots,R).
\]

Then the family
\[
W_{k\ell}
=
\frac{1}{|\Delta|}
\int_\Delta
u_k(t)\otimes v_\ell(t)\,dt
\]
admits a common Schmidt decomposition. More precisely,
\[
W_{k\ell}
=
\sum_{r=1}^{R}
a_r^{(k\ell)}\,e_r\otimes f_r,
\]
where
\[
a_r^{(k\ell)}
=
\frac{1}{|\Delta|}
\int_\Delta
\phi_{kr}(t)\overline{\psi_{\ell r}(t)}\,dt.
\]
\end{proposition}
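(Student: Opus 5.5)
The plan is to compute $W_{k\ell}$ directly by expanding the mode functions in the given sector representation, integrating term by term, and using the pairwise orthogonality of the sectors $F_r$ to remove all mixed contributions. Concretely, I will work with the operator picture $\hat M_{k\ell}=\frac{1}{|\Delta|}\int_\Delta |u_k(t)\rangle\langle v_\ell(t)|\,dt$ introduced in the subsection on common Schmidt decompositions, which corresponds to $W_{k\ell}$ under the canonical identification $H_A\otimes H_B\cong L(H_B,H_A)$. This identification is conjugate-linear in the second factor, $a\otimes b\leftrightarrow |a\rangle\langle b|$. That is exactly what produces the complex conjugate $\overline{\psi_{\ell r}}$ in the claimed formula for $a_r^{(k\ell)}$.

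First, I substitute $u_k(t)=\sum_r\phi_{kr}(t)e_r$ and $v_\ell(t)=\sum_q\psi_{\ell q}(t)f_q$ to get, for each $t$,
\[
|u_k(t)\rangle\langle v_\ell(t)|=\sum_{r,q=1}^{R}\phi_{kr}(t)\,\overline{\psi_{\ell q}(t)}\;|e_r\rangle\langle f_q|.
\]
Second, I integrate over $\Delta$. The sum is finite and each coefficient $\phi_{kr}\overline{\psi_{\ell q}}$ is in $L^1(\Delta)$ by Cauchy--Schwarz, since $\phi_{kr},\psi_{\ell q}\in L^2(\Delta)$. Hence integration commutes with the sum and gives
\[
\hat M_{k\ell}=\sum_{r,q}\frac{1}{|\Delta|}\langle \phi_{kr},\psi_{\ell q}\rangle_{L^2(\Delta)}\,|e_r\rangle\langle f_q|.
\]
Third, for $r\neq q$ we have $\phi_{kr}\in F_r$ and $\psi_{\ell q}\in F_q$ with $F_r\perp F_q$, so the inner product vanishes. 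Only the diagonal terms survive, and these are exactly $a_r^{(k\ell)}|e_r\rangle\langle f_r|$ with the stated $a_r^{(k\ell)}$. Translating back through the identification gives $W_{k\ell}=\sum_r a_r^{(k\ell)}e_r\otimes f_r$. The same orthonormal systems $\{e_r\},\{f_r\}$ serve for every pair $(k,\ell)$, which is precisely the definition of a common Schmidt decomposition. The complement $F_\perp$ plays no role, because no mode has a component there.

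The main obstacle is the conjugation convention rather than any analytic difficulty. If one reads $u_k(t)\otimes v_\ell(t)$ as a bilinear tensor, as in the proof of Proposition~\ref{Proposition1}, the mixed terms become $\int_\Delta\phi_{kr}\psi_{\ell q}\,dt$ without a conjugate. Orthogonality of $F_r$ and $F_q$ does not by itself make these vanish. I will therefore state explicitly that $W_{k\ell}$ is interpreted via $\hat M_{k\ell}$, which yields the formula as stated. In a remark I will add that under the bilinear reading the same conclusion holds if each sector is closed under complex conjugation, $\overline{F_r}=F_r$. In that case the coefficient becomes $\frac{1}{|\Delta|}\int_\Delta\phi_{kr}\psi_{\ell r}\,dt$. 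The condition $\overline{F_r}=F_r$ holds, for instance, for sectors spanned by real-valued functions such as the cosine and sine sectors of the example in Section~\ref{example}.
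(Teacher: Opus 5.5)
Your proof is correct and follows the paper's argument: expand $u_k(t)\otimes v_\ell(t)$ in the basis $e_r\otimes f_s$, integrate term by term, and use $F_r\perp F_s$ for $r\neq s$ to kill the mixed overlaps $\int_\Delta\phi_{kr}\overline{\psi_{\ell s}}\,dt$. The only difference is that you route the computation through $\hat M_{k\ell}$ and the conjugate-linear correspondence $a\otimes b\leftrightarrow|a\rangle\langle b|$, which explains where the conjugate $\overline{\psi_{\ell s}}$ comes from; the paper's proof writes it into the expansion of $u_k(t)\otimes v_\ell(t)$ without comment, and your remark that the bilinear reading would additionally need $\overline{F_r}=F_r$ is accurate.
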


\begin{proof}
Substituting the mode expansions gives
\[
u_k(t)\otimes v_\ell(t)
=
\sum_{r,s=1}^{R}
\phi_{kr}(t)\overline{\psi_{\ell s}(t)}\,e_r\otimes f_s,
\]
and therefore
\[
W_{k\ell}
=
\sum_{r,s=1}^{R}
\left(
\frac{1}{|\Delta|}
\int_\Delta
\phi_{kr}(t)\overline{\psi_{\ell s}(t)}\,dt
\right)e_r\otimes f_s.
\]

Since \(F_r\perp F_s\) for \(r\neq s\), and
\[
\phi_{kr}\in F_r,
\qquad
\psi_{\ell s}\in F_s,
\]
we have
\[
\int_\Delta
\phi_{kr}(t)\overline{\psi_{\ell s}(t)}\,dt
=
0
\qquad
(r\neq s).
\]
Hence all off-diagonal terms vanish and only the contributions with \(r=s\)
remain:
\[
W_{k\ell}
=
\sum_{r=1}^{R}
\left(
\frac{1}{|\Delta|}
\int_\Delta
\phi_{kr}(t)\overline{\psi_{\ell r}(t)}\,dt
\right)e_r\otimes f_r.
\]
This is exactly the claimed common Schmidt decomposition.
\end{proof}

Proposition~\ref{Tsector} shows that the temporal-sector decomposition reduces
the construction of the family \(\{W_{k\ell}\}\) to the construction of the scalar
coefficients
\[
a_r^{(k\ell)}
=
\frac{1}{|\Delta|}
\int_\Delta
\phi_{kr}(t)\overline{\psi_{\ell r}(t)}\,dt.
\]
Once the Schmidt directions
\[
e_1\otimes f_1,\dots,e_R\otimes f_R
\]
are fixed, the resulting DCM state is determined by these coefficients and hence
by the Schmidt--coherence matrix
\[
\Gamma_{rs}
=
\sum_{k,\ell}
\lambda_k\mu_\ell\,
a_r^{(k\ell)}\overline{a_s^{(k\ell)}}.
\]

A particularly transparent specialization is obtained when, inside each temporal
sector \(F_r\), all \(A\)-components are proportional to one common profile and
all \(B\)-components are proportional to another common profile in the same
sector.

\begin{corollary}[Paired sector profiles]
\label{PairedSectorProfiles}
Assume the hypotheses of Proposition~\ref{Tsector}. In addition, suppose that for
each \(r=1,\dots,R\) there exist functions
\[
\chi_r^A\in F_r,
\qquad
\chi_r^B\in F_r,
\]
and complex coefficients \(\alpha_{kr},\beta_{\ell r}\in\mathbb C\) such that
\[
\phi_{kr}(t)=\alpha_{kr}\chi_r^A(t),
\qquad
\psi_{\ell r}(t)=\beta_{\ell r}\chi_r^B(t).
\]
Define the sector overlap constants
\[
\eta_r
:=
\frac{1}{|\Delta|}
\int_\Delta
\chi_r^A(t)\overline{\chi_r^B(t)}\,dt.
\]
Then
\[
W_{k\ell}
=
\sum_{r=1}^{R}
\eta_r\,\alpha_{kr}\overline{\beta_{\ell r}}\,e_r\otimes f_r.
\]
Equivalently,
\[
a_r^{(k\ell)}=\eta_r\,\alpha_{kr}\overline{\beta_{\ell r}}.
\]
\end{corollary}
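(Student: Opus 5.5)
This is a direct specialization of Proposition~\ref{Tsector}, and the plan is to substitute the product form of the sector components into the coefficient formula given there. Since we assume the hypotheses of Proposition~\ref{Tsector}, the family $\{W_{k\ell}\}$ already admits the common Schmidt decomposition
\[
W_{k\ell}=\sum_{r=1}^{R}a_r^{(k\ell)}\,e_r\otimes f_r,
\qquad
a_r^{(k\ell)}=\frac{1}{|\Delta|}\int_\Delta \phi_{kr}(t)\overline{\psi_{\ell r}(t)}\,dt .
\]
So nothing about the vanishing of mixed-sector terms has to be re-proved. The only work is to evaluate the diagonal coefficients under the additional assumption
\[
\phi_{kr}=\alpha_{kr}\chi_r^A,\qquad \psi_{\ell r}=\beta_{\ell r}\chi_r^B .
\]

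First, I will check that the added assumption is compatible with the hypotheses of Proposition~\ref{Tsector}. Because $\chi_r^A,\chi_r^B\in F_r$ and $F_r$ is a linear subspace, the scalar multiples $\alpha_{kr}\chi_r^A$ and $\beta_{\ell r}\chi_r^B$ also lie in $F_r$. Hence the sector condition $\phi_{kr},\psi_{\ell r}\in F_r$ is met.

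Second, I will insert the product form into the integral. Complex conjugation gives $\overline{\psi_{\ell r}}=\overline{\beta_{\ell r}}\,\overline{\chi_r^B}$. The constants $\alpha_{kr}\overline{\beta_{\ell r}}$ do not depend on $t$, so they can be pulled out of the integral:
\[
a_r^{(k\ell)}=\alpha_{kr}\overline{\beta_{\ell r}}\cdot\frac{1}{|\Delta|}\int_\Delta\chi_r^A(t)\overline{\chi_r^B(t)}\,dt=\eta_r\,\alpha_{kr}\overline{\beta_{\ell r}} .
\]
The integral is finite by Cauchy--Schwarz in $L^2(\Delta)$, since both profiles are square integrable. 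Substituting this into the decomposition yields the stated formula for $W_{k\ell}$.

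No step here is a genuine obstacle; the argument is a one-line substitution. The only point that needs care is the convention for the conjugate on the $B$-side, which must match the tensor--operator identification used in the paper. Specifically, Proposition~\ref{Tsector} writes the $v$-components with a complex conjugate, consistent with $|u\rangle\langle v|$. I will follow exactly that convention so that the factor appears as $\overline{\beta_{\ell r}}$ rather than $\beta_{\ell r}$.
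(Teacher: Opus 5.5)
Your proposal is correct and matches the paper's proof: you take the coefficient formula from Proposition~\ref{Tsector}, substitute $\phi_{kr}=\alpha_{kr}\chi_r^A$ and $\psi_{\ell r}=\beta_{\ell r}\chi_r^B$, and pull the constant $\alpha_{kr}\overline{\beta_{\ell r}}$ out of the integral to obtain $\eta_r$. Your extra checks (that the scalar multiples lie in $F_r$, finiteness via Cauchy--Schwarz, and using the conjugation convention of Proposition~\ref{Tsector}) are sound additions that the paper leaves implicit.
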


\begin{proof}
By Proposition~\ref{Tsector},
\[
a_r^{(k\ell)}
=
\frac{1}{|\Delta|}
\int_\Delta
\phi_{kr}(t)\overline{\psi_{\ell r}(t)}\,dt.
\]
Substituting
\[
\phi_{kr}(t)=\alpha_{kr}\chi_r^A(t),
\qquad
\psi_{\ell r}(t)=\beta_{\ell r}\chi_r^B(t),
\]
we obtain
\[
a_r^{(k\ell)}
=
\frac{1}{|\Delta|}
\int_\Delta
\alpha_{kr}\chi_r^A(t)\,
\overline{\beta_{\ell r}\chi_r^B(t)}\,dt
=
\alpha_{kr}\overline{\beta_{\ell r}}
\frac{1}{|\Delta|}
\int_\Delta
\chi_r^A(t)\overline{\chi_r^B(t)}\,dt.
\]
This is exactly
\[
a_r^{(k\ell)}=\eta_r\,\alpha_{kr}\overline{\beta_{\ell r}}.
\]
Therefore
\[
W_{k\ell}
=
\sum_{r=1}^{R}
\eta_r\,\alpha_{kr}\overline{\beta_{\ell r}}\,e_r\otimes f_r.
\]
\end{proof}

Corollary~\ref{PairedSectorProfiles} shows that the coefficients of the common
Schmidt decomposition factor into three ingredients:
\begin{itemize}
\item the amplitudes \(\alpha_{kr}\) of the \(A\)-modes in sector \(F_r\),
\item the amplitudes \(\beta_{\ell r}\) of the \(B\)-modes in sector \(F_r\),
\item the temporal overlap constants
\[
\eta_r
=
\frac{1}{|\Delta|}
\int_\Delta
\chi_r^A(t)\overline{\chi_r^B(t)}\,dt.
\]
\end{itemize}
Accordingly,
\[
a_r^{(k\ell)}=\eta_r\,\alpha_{kr}\overline{\beta_{\ell r}},
\]
and the Schmidt--coherence matrix takes the form
\[
\Gamma_{rs}
=
\eta_r\overline{\eta_s}
\sum_{k,\ell}
\lambda_k\mu_\ell\,
\alpha_{kr}\overline{\alpha_{ks}}\,
\overline{\beta_{\ell r}}\beta_{\ell s}.
\]
Thus the temporal-sector approach provides a concrete constructive mechanism:
the orthogonal sectors determine the common Schmidt directions, while the
amplitudes \(\alpha_{kr}\), \(\beta_{\ell r}\) and the sector overlaps \(\eta_r\)
determine the coefficients \(a_r^{(k\ell)}\) and hence the resulting DCM state.

\section{Examples: entangled Gaussian DCM states beyond the rank-one
realization theorem}

The realization theorem of Section 6 corresponds to the special rank-one situation in which
all nonzero tensors $W_{kl}$ are collinear with a single target vector, so that the resulting
DCM state is pure. Proposition \ref{cr} shows that the common Schmidt decomposition framework
is substantially more general: one may construct tensor families $\{W_{kl}\}$ that are not collinear,
and nevertheless obtain entangled states. In this subsection we present two elementary examples in the two-qubit case.

Throughout this subsection let
\[
H_A=H_B=\mathbb C^2
\]
with fixed orthonormal bases
\[
\{e_1,e_2\}\subset H_A,\qquad \{f_1,f_2\}\subset H_B.
\]
We work in the common Schmidt basis
\[
e_1\otimes f_1,\qquad e_2\otimes f_2.
\]

\begin{example}[Bell-type mixed entangled state generated by two tensors]
Assume that only two tensors are nonzero, namely
\[
W_{11}=e_1\otimes f_1+e_2\otimes f_2,
\qquad
W_{22}=e_1\otimes f_1-e_2\otimes f_2,
\]
and that
\[
W_{k\ell}=0
\qquad\text{for all other pairs }(k,\ell).
\]
Then the Gaussian DCM covariance operator is
\[
\widehat C=\lambda_1\mu_1\,|W_{11}\rangle\langle W_{11}|
+\lambda_2\mu_2\,|W_{22}\rangle\langle W_{22}|.
\]

Expanding the two rank-one operators gives
\[
|W_{11}\rangle\langle W_{11}|
=
|e_1\otimes f_1\rangle\langle e_1\otimes f_1|
+|e_1\otimes f_1\rangle\langle e_2\otimes f_2|
+|e_2\otimes f_2\rangle\langle e_1\otimes f_1|
+|e_2\otimes f_2\rangle\langle e_2\otimes f_2|,
\]
and
\[
|W_{22}\rangle\langle W_{22}|
=
|e_1\otimes f_1\rangle\langle e_1\otimes f_1|
-|e_1\otimes f_1\rangle\langle e_2\otimes f_2|
-|e_2\otimes f_2\rangle\langle e_1\otimes f_1|
+|e_2\otimes f_2\rangle\langle e_2\otimes f_2|.
\]
Therefore
\[
\widehat  C=
(\lambda_1\mu_1+\lambda_2\mu_2)
\Bigl(
|e_1\otimes f_1\rangle\langle e_1\otimes f_1|
+
|e_2\otimes f_2\rangle\langle e_2\otimes f_2|
\Bigr)
\]
\[
\qquad\qquad
+
(\lambda_1\mu_1-\lambda_2\mu_2)
\Bigl(
|e_1\otimes f_1\rangle\langle e_2\otimes f_2|
+
|e_2\otimes f_2\rangle\langle e_1\otimes f_1|
\Bigr).
\]

Equivalently, the Schmidt--coherence matrix is
\[
\Gamma=
\begin{pmatrix}
\lambda_1\mu_1+\lambda_2\mu_2 &
\lambda_1\mu_1-\lambda_2\mu_2\\[1mm]
\lambda_1\mu_1-\lambda_2\mu_2 &
\lambda_1\mu_1+\lambda_2\mu_2
\end{pmatrix}.
\]
Hence
\[
\Gamma_{12}=\lambda_1\mu_1-\lambda_2\mu_2.
\]
By Proposition \ref{cr}, the normalized DCM state
\[
\hat \rho_C=\frac{\widehat C}{\rm{Tr}\widehat C}
\]
is entangled whenever
\[
\lambda_1\mu_1\neq \lambda_2\mu_2.
\]

Thus unequal Gaussian weights produce a genuinely mixed entangled state. If
\[
\lambda_1\mu_1=\lambda_2\mu_2,
\]
then the off-diagonal terms cancel and
\[
\hat \rho=
\frac12
\Bigl(
|e_1\otimes f_1\rangle\langle e_1\otimes f_1|
+
|e_2\otimes f_2\rangle\langle e_2\otimes f_2|
\Bigr),
\]
which is separable.
\end{example}

\begin{example}[Bell state plus a product contribution]
We now consider a family consisting of one entangled tensor and one product tensor:
\[
W_{11}=\frac{1}{\sqrt2}\bigl(e_1\otimes f_1+e_2\otimes f_2\bigr),
\qquad
W_{22}=e_1\otimes f_1,
\]
and
\[
W_{k\ell}=0
\qquad\text{for all other pairs }(k,\ell).
\]
Then
\[
\widehat  C=\lambda_1\mu_1\,|W_{11}\rangle\langle W_{11}|
+\lambda_2\mu_2\,|W_{22}\rangle\langle W_{22}|.
\]

A direct computation yields
\[
|W_{11}\rangle\langle W_{11}|
=
\frac12
\Bigl(
|e_1\otimes f_1\rangle\langle e_1\otimes f_1|
+|e_1\otimes f_1\rangle\langle e_2\otimes f_2|
+|e_2\otimes f_2\rangle\langle e_1\otimes f_1|
+|e_2\otimes f_2\rangle\langle e_2\otimes f_2|
\Bigr),
\]
whereas
\[
|W_{22}\rangle\langle W_{22}|
=
|e_1\otimes f_1\rangle\langle e_1\otimes f_1|.
\]
Therefore
\[
C=
\left(\frac12\lambda_1\mu_1+\lambda_2\mu_2\right)
|e_1\otimes f_1\rangle\langle e_1\otimes f_1|
+\frac12\lambda_1\mu_1
|e_1\otimes f_1\rangle\langle e_2\otimes f_2|
\]
\[
\qquad
+\frac12\lambda_1\mu_1
|e_2\otimes f_2\rangle\langle e_1\otimes f_1|
+\frac12\lambda_1\mu_1
|e_2\otimes f_2\rangle\langle e_2\otimes f_2|.
\]
Hence the Schmidt--coherence matrix is
\[
\Gamma=
\begin{pmatrix}
\frac12\lambda_1\mu_1+\lambda_2\mu_2 & \frac12\lambda_1\mu_1\\[1mm]
\frac12\lambda_1\mu_1 & \frac12\lambda_1\mu_1
\end{pmatrix}.
\]
In particular,
\[
\Gamma_{12}=\frac12\lambda_1\mu_1.
\]
Therefore, by Proposition~2, the normalized DCM state is entangled whenever
\[
\lambda_1\mu_1>0.
\]

Thus any nonzero Bell component already forces entanglement, even in the presence of an
additional product contribution. This example may be interpreted as a Bell state perturbed by
a separable Gaussian contribution.
\end{example}

These examples illustrate that Proposition~2 yields a broad class of entangled Gaussian DCM
states that are qualitatively different from the rank-one realization theorem. In the realization
theorem, all nonzero tensors \(W_{k\ell}\) are proportional to a single vector and the resulting
state is pure. In contrast, the present examples involve several linearly independent tensors
sharing a common Schmidt basis, so that the DCM state is mixed. Entanglement is then encoded
in the off-diagonal entries of the Schmidt--coherence matrix \(\Gamma\).

\begin{example}[Temporal-sector realization of a Bell-type mixed state]

We return to the Bell-type mixed family considered above, namely
\[
W_{11}=e_1\otimes f_1+e_2\otimes f_2,
\qquad
W_{22}=e_1\otimes f_1-e_2\otimes f_2,
\]
and \(W_{k\ell}=0\) for all other pairs \((k,\ell)\),
and consider examples of  the corresponding temporal sector realizations. 

Let the temporal Hilbert space admit an orthogonal decomposition
\[
L^2(\Delta)=F_1\oplus F_2\oplus F_\perp,
\]
where \(F_1\) and \(F_2\) are closed mutually orthogonal subspaces. Choose
functions
\[
\chi_1^A,\chi_1^B\in F_1,
\qquad
\chi_2^A,\chi_2^B\in F_2,
\]
and define the sector overlap constants
\[
\eta_r
=
\frac1{|\Delta|}
\int_\Delta \chi_r^A(t)\overline{\chi_r^B(t)}\,dt,
\qquad r=1,2.
\]

Assume for simplicity that
\[
\eta_1=\eta_2=1.
\]
For instance, this holds if one chooses normalized profiles with
\(\chi_r^A=\chi_r^B\), but this equality is not required in the general framework.

Define the temporal modes by
\[
u_1(t)=\chi_1^A(t)e_1+\chi_2^A(t)e_2,
\qquad
v_1(t)=\chi_1^B(t)f_1+\chi_2^B(t)f_2,
\]
\[
u_2(t)=\chi_1^A(t)e_1+\chi_2^A(t)e_2,
\qquad
v_2(t)=\chi_1^B(t)f_1-\chi_2^B(t)f_2.
\]
Equivalently, the nonzero amplitudes are
\[
\alpha_{11}=\alpha_{12}=\alpha_{21}=\alpha_{22}=1,
\]
\[
\beta_{11}=1,\qquad \beta_{12}=1,\qquad
\beta_{21}=1,\qquad \beta_{22}=-1.
\]

By Corollary~\ref{PairedSectorProfiles},
\[
W_{11}
=
\eta_1\alpha_{11}\overline{\beta_{11}}\,e_1\otimes f_1
+
\eta_2\alpha_{12}\overline{\beta_{12}}\,e_2\otimes f_2
=
e_1\otimes f_1+e_2\otimes f_2,
\]
and
\[
W_{22}
=
\eta_1\alpha_{21}\overline{\beta_{21}}\,e_1\otimes f_1
+
\eta_2\alpha_{22}\overline{\beta_{22}}\,e_2\otimes f_2
=
e_1\otimes f_1-e_2\otimes f_2.
\]
Thus the Bell-type mixed state is realized by using the same two temporal
sectors for the two Schmidt directions and flipping the sign of the second
sector in the second \(B\)-mode.
\end{example}

\begin{example}[Temporal-sector realization of a Bell-plus-product family]
We now realize the family
\[
W_{11}=\frac1{\sqrt2}(e_1\otimes f_1+e_2\otimes f_2),
\qquad
W_{22}=e_1\otimes f_1,
\]
and \(W_{k\ell}=0\) for all other pairs \((k,\ell)\).

Let
\[
L^2(\Delta)=F_1\oplus F_2\oplus F_\perp,
\]
with \(F_1\perp F_2\), and choose functions
\[
\chi_1^A,\chi_1^B\in F_1,
\qquad
\chi_2^A,\chi_2^B\in F_2.
\]
Define
\[
\eta_r
=
\frac1{|\Delta|}
\int_\Delta \chi_r^A(t)\overline{\chi_r^B(t)}\,dt,
\qquad r=1,2,
\]
and assume again for simplicity that
\[
\eta_1=\eta_2=1.
\]

Define the temporal modes by
\[
u_1(t)=\frac1{\sqrt2}\chi_1^A(t)e_1+\frac1{\sqrt2}\chi_2^A(t)e_2,
\qquad
v_1(t)=\chi_1^B(t)f_1+\chi_2^B(t)f_2,
\]
\[
u_2(t)=\chi_1^A(t)e_1,
\qquad
v_2(t)=\chi_1^B(t)f_1.
\]
Equivalently, the nonzero amplitudes are
\[
\alpha_{11}=\frac1{\sqrt2},\qquad \alpha_{12}=\frac1{\sqrt2},
\qquad
\beta_{11}=1,\qquad \beta_{12}=1,
\]
\[
\alpha_{21}=1,\qquad \alpha_{22}=0,
\qquad
\beta_{21}=1,\qquad \beta_{22}=0.
\]

Again by Corollary~\ref{PairedSectorProfiles},
\[
W_{11}
=
\eta_1\alpha_{11}\overline{\beta_{11}}\,e_1\otimes f_1
+
\eta_2\alpha_{12}\overline{\beta_{12}}\,e_2\otimes f_2
=
\frac1{\sqrt2}(e_1\otimes f_1+e_2\otimes f_2),
\]
while
\[
W_{22}
=
\eta_1\alpha_{21}\overline{\beta_{21}}\,e_1\otimes f_1
+
\eta_2\alpha_{22}\overline{\beta_{22}}\,e_2\otimes f_2
=
e_1\otimes f_1.
\]
Hence the Bell-plus-product family is obtained by suppressing the second
temporal sector in the second pair of modes.
\end{example}

These examples illustrate concretely the mechanism behind
Proposition~\ref{Tsector}, Corollary~\ref{PairedSectorProfiles}, and
Proposition~2. The orthogonal temporal sectors \(F_r\) determine the Schmidt
directions \(e_r\otimes f_r\), while the amplitudes \(\alpha_{kr}\),
\(\beta_{\ell r}\), together with the sector overlap constants \(\eta_r\),
determine the coefficients
\[
a_r^{(k\ell)}=\eta_r\,\alpha_{kr}\overline{\beta_{\ell r}}.
\]
Consequently,
\[
\Gamma_{rs}
=
\eta_r\overline{\eta_s}
\sum_{k,\ell}\lambda_k\mu_\ell\,
\alpha_{kr}\overline{\alpha_{ks}}\,
\overline{\beta_{\ell r}}\beta_{\ell s}.
\]
Thus the temporal-sector decomposition provides a concrete constructive route
from Gaussian processes to a broad family of mixed entangled DCM states, well
beyond the rank-one pure-state realization theorem.

For simplicity, we realize these examples within the special framework of Corollary, where all mode components in a given sector are proportional to fixed profiles $\chi_r^A(t)$ and $\chi_r^B(t).$ This is not essential for the existence of the examples, but 
it yields a particularly transparent construction. Below we present schematically the general framework.

\paragraph{Different temporal profiles in the same sector.}

The paired-profile assumption of Corollary~\ref{PairedSectorProfiles} is only a
convenient sufficient condition for obtaining explicit coefficient formulas.
It is not necessary for the realization of entangled families \(\{W_{k\ell}\}\).

To illustrate this, suppose that two temporal sectors \(F_1,F_2\subset L^2(\Delta)\)
are fixed. Choose functions
\[
\phi_{11},\phi_{21}\in F_1,
\qquad
\psi_{11},\psi_{21}\in F_1,
\]
and
\[
\phi_{12},\phi_{22}\in F_2,
\qquad
\psi_{12},\psi_{22}\in F_2.
\]
Define the temporal modes by
\[
u_1(t)=\phi_{11}(t)e_1+\phi_{12}(t)e_2,
\qquad
u_2(t)=\phi_{21}(t)e_1+\phi_{22}(t)e_2,
\]
and
\[
v_1(t)=\psi_{11}(t)f_1+\psi_{12}(t)f_2,
\qquad
v_2(t)=\psi_{21}(t)f_1+\psi_{22}(t)f_2.
\]

Then Proposition~\ref{Tsector} yields
\[
W_{k\ell}
=
a_1^{(k\ell)}\,e_1\otimes f_1
+
a_2^{(k\ell)}\,e_2\otimes f_2,
\]
where the coefficients are given by the temporal overlaps
\[
a_1^{(11)}
=
\frac{1}{|\Delta|}
\int_\Delta
\phi_{11}(t)\overline{\psi_{11}(t)}\,dt,
\qquad
a_2^{(11)}
=
\frac{1}{|\Delta|}
\int_\Delta
\phi_{12}(t)\overline{\psi_{12}(t)}\,dt,
\]
\[
a_1^{(22)}
=
\frac{1}{|\Delta|}
\int_\Delta
\phi_{21}(t)\overline{\psi_{21}(t)}\,dt,
\qquad
a_2^{(22)}
=
\frac{1}{|\Delta|}
\int_\Delta
\phi_{22}(t)\overline{\psi_{22}(t)}\,dt.
\]

Hence, in order to realize the Bell-type family
\[
W_{11}=e_1\otimes f_1+e_2\otimes f_2,
\qquad
W_{22}=e_1\otimes f_1-e_2\otimes f_2,
\]
it is enough to choose these four overlaps so that
\[
a_1^{(11)}=1,\qquad
a_2^{(11)}=1,\qquad
a_1^{(22)}=1,\qquad
a_2^{(22)}=-1.
\]
There is no requirement that the temporal profiles coincide across the different
modes. In particular, one does not need
\[
\phi_{11}=\phi_{21},
\qquad
\phi_{12}=\phi_{22},
\]
nor
\[
\psi_{11}=\psi_{21},
\qquad
\psi_{12}=\psi_{22}.
\]

The price of allowing such freedom is that one loses the simple factorization
from Corollary~\ref{PairedSectorProfiles}. In the paired-profile setting, the
coefficients factor as
\[
a_r^{(k\ell)}=\eta_r\,\alpha_{kr}\overline{\beta_{\ell r}},
\]
so that each sector contributes a rank-one coefficient array in the indices
\((k,\ell)\). In the present more general setting, each coefficient is instead
an independent temporal overlap,
\[
a_r^{(k\ell)}
=
\frac{1}{|\Delta|}
\int_\Delta
\phi_{kr}(t)\overline{\psi_{\ell r}(t)}\,dt
=
\frac{1}{|\Delta|}\,
\langle \psi_{\ell r},\phi_{kr}\rangle_{L^2(\Delta)}.
\]
Thus the paired-profile ansatz provides a particularly transparent
rank-one factorization in each sector, whereas the general temporal-sector
framework allows a much larger class of coefficient arrays.

This is a good place to emphasize once again that, in DCM, large classes of
classical states (stochastic processes) may give rise to the same quantum state. 

\section{Conclusion and Outlook}

The Double Covariance Model (DCM) was introduced as a classical probabilistic
framework in which quantum states of composite systems are generated from
classical stochastic processes by means of a \emph{double covariance}
construction, that is, as covariances of random temporal covariances. In its
full generality, this construction is genuinely fourth order: the DCM state is
determined by the covariance of the random tensor
\[
C_\Delta(\omega)=\frac1{|\Delta|}\int_\Delta X_t(\omega)\otimes Y_t(\omega)\,dt,
\]
or, equivalently, by the covariance of the corresponding random operator
\[
\widehat C_\Delta(\omega)=\frac1{|\Delta|}\int_\Delta |X_t(\omega)\rangle\langle
Y_t(\omega)|\,dt.
\]
Thus the DCM naturally lives in a fourth-order statistical framework.

The main result of the present paper is that, for jointly Gaussian processes,
this fourth-order structure admits a complete reduction to second-order
covariance data. By the Isserlis--Wick theorem, every fourth-order moment
entering the DCM construction is uniquely expressed in terms of pair
covariances. Consequently, the double-covariance operator, and therefore the
associated quantum density operator, is completely determined by the second-order
covariance kernel of the joint Gaussian process. In this sense, the Gaussian
reduction transforms the DCM from a fourth-order statistical model into a
second-order one without changing the resulting quantum state.

This reduction is not merely a technical simplification. It leads to a concrete
and flexible Gaussian realization theory for quantum states generated by the
DCM. In particular, by working with the Karhunen--Lo\`eve expansions of the
subquantum Gaussian processes, one obtains an explicit description of the
time-averaged tensor amplitudes
\[
W_{k\ell}
=
\frac1{|\Delta|}\int_\Delta u_k(t)\otimes v_\ell(t)\,dt,
\]
where \(u_k\) and \(v_\ell\) are the Karhunen--Lo\`eve modes of the two
processes. These tensors form the basic deterministic objects of the Gaussian
DCM construction. In the independent Gaussian case, the DCM covariance operator
takes the form
\[
\widehat C
=
\sum_{k,\ell}\lambda_k\mu_\ell\,|W_{k\ell}\rangle\langle W_{k\ell}|,
\]
so that the resulting quantum state is completely determined by the geometry of
the tensor family \(\{W_{k\ell}\}\).

This point of view reveals a feature of the DCM that is both mathematically
striking and conceptually counterintuitive from the perspective of standard
quantum-information intuition: quantum entangled states can be generated by
\emph{independent} jointly Gaussian processes. Thus, within the DCM, the origin
of entanglement need not lie in ordinary second-order statistical dependence
between the subquantum processes \(X\) and \(Y\). In the Gaussian framework, the
decisive ingredient is instead the temporal organization of the Karhunen--Lo\`eve
modes and, more specifically, the structure of the integrated tensors
\(W_{k\ell}\).

For pure-state realizations, this temporal organization is naturally encoded by
the decomposition of the temporal Hilbert space into mutually orthogonal
sectors. The temporal-sector construction developed in this paper shows that the
Schmidt structure of the target quantum state may be realized through an
appropriate allocation of the Gaussian modes to orthogonal temporal sectors. In
this way, the temporal-sector decomposition emerges as a basic constructive
mechanism for the generation of entanglement in the Gaussian DCM.

The same Gaussian framework also gives a new interpretation of concurrence. For
the pure-state realizations considered in this paper, the Schmidt coefficients
are directly related to the expected distribution of the subquantum signal
energy among the temporal sectors. Hence concurrence can be expressed in terms
of the relative sector energies. We stress that this ``energy'' is not the
physical energy observable of the quantum system itself. Rather, it is the
energy of the classical random signals at the subquantum time scale, that is,
an internal quantity of the DCM representation. Within this representation,
entanglement becomes linked to a specific pattern of energy redistribution among
orthogonal temporal sectors.

Beyond the rank-one realization theorem for pure states, the Gaussian reduction
also suggests a broader general theory of Gaussian DCM states. In the
independent Gaussian case, the entire DCM state is encoded by the tensor family
\(\{W_{k\ell}\}\), and this shifts the emphasis from covariance kernels
themselves to the geometry of time-averaged tensor modes. This perspective led
us to the study of families \(\{W_{k\ell}\}\) admitting a common Schmidt
decomposition and to the corresponding Schmidt--coherence matrix. In that
framework, the temporal-sector decomposition provides a natural sufficient
condition for the existence of a common Schmidt decomposition and hence a
constructive route to broad classes of mixed entangled Gaussian DCM states.

At the same time, the present results should be interpreted with appropriate
caution. Neither the DCM itself nor its Gaussian reduction can at present be
regarded as a complete classical probabilistic foundation of quantum mechanics.
What has been established so far is more modest, though still substantial. On
the one hand, the DCM provides a mathematically explicit classical mechanism for
the generation of quantum entangled states, including states produced by
independent Gaussian processes. On the other hand, earlier work has shown that
the same framework can be extended to quantum Markovian dynamics. These results
capture important structural aspects of quantum theory, but they are far from
constituting a full reconstruction of quantum mechanics.

Nevertheless, the DCM and its Gaussian reduction open a promising direction for
research on the interplay between classical and quantum probability. Their main
novelty lies in the use of a genuinely double-covariance construction based on
two distinct time scales and on the interplay between temporal and statistical
covariances. From this perspective, quantum states are not represented by
ordinary second-order classical statistics, but by statistical covariances of
random temporal covariances. The Gaussian reduction shows that this fourth-order
framework can nevertheless be controlled by second-order covariance data, while
still retaining a rich entanglement structure. We expect that further
development of this viewpoint may lead to a broader geometric and probabilistic
theory of Gaussian DCM states and, more generally, to new insights into the
classical probabilistic structures underlying quantum phenomena.

\section*{Acknowledgments} 

This study was stimulated by a remark made by Lev Murokh during the discussion following my talk at the conference QIP26 (V\"axj\"o, Sweden, June 2026). He appreciated the novelty of the DCM as a fourth-order statistical framework, but pointed out that in the Gaussian case it should admit a reduction to second-order statistics. I am grateful to him for this insightful observation, which motivated me to write the present paper. I would also like to thank the other participants of QIP26 for valuable discussions and helpful comments.

\section*{Appendix: Isserlis--Wick Reduction for Circular Gaussian Processes}

\paragraph{On the Historical and Structural Equivalence of Isserlis's and Wick's Theorems.}

In the physics literature, the algebraic reduction of higher-order statistical moments to combinatoric sums of products of second-order covariances is almost universally referred to as \textit{Wick's theorem}. In its original 1950 formulation \cite{wick1950}, Wick's theorem is strictly an operator-theoretic result designed for quantum field theory, dictating how products of non-commuting creation and annihilation operators on a Fock space decompose into sums of pairwise contractions. 

However, later mathematical physicists recognized a profound structural parallel: the combinatorics governing quantum vacuum expectation values of field operators are formally identical to the combinatorics of calculating higher-order moments of classical Gaussian random variables (see, e.g., \cite{simon1974,janson1997}). Because physicists utilize quantum field theory daily, the name \textit{Wick's theorem} evolved into a ubiquitous blanket term within the physics community for any algebraic procedure that partitions a fourth-order (or higher) product into pairwise contractions---regardless of whether the underlying system is classical or quantum. Consequently, even in purely classical domains such as statistical mechanics or stochastic signal processing, physicists heavily favor the invocation of ``Wick's theorem.''

Despite this widespread convention, DCM is established within a strictly classical probabilistic framework governed by Kolmogorov probability spaces and commuting, complex-valued stochastic processes. To maintain mathematical precision and emphasize the purely classical foundations of our model, we invoke the proper historical designation from multivariate statistics: \textit{Isserlis's theorem} (originally proved by Leon Isserlis in 1918 
\cite{isserlis1918,peccati2011}). 

Crucially, when Isserlis's theorem is restricted to the \textit{centered circular complex} Gaussian processes used in this work, phase-invariance symmetry forces all pure non-conjugated and pure conjugated pseudocovariances ($\mathbb{E}[X_t X_s] = 0$) to vanish identically. This classical probabilistic condition isolates a set of surviving cross-contractions that is combinatorially identical to the behavior of field operators in quantum mechanics, where non-matching operator pairings vanish ($\langle 0 | aa | 0 \rangle = 0$). Utilizing Isserlis's theorem under circularity thus preserves the purely classical foundation of our model while cleanly recovering the algebraic reduction rules required to generate entangled quantum states, showing that this specific contraction structure is an intrinsic feature of circular Gaussianity rather than an exclusively quantum phenomenon.

\begin{theorem}
Let $X_t$ and $Y_t$ be centered circular Gaussian Hilbert-space-valued stochastic processes. Then the fourth-order tensor kernel $\hat{K}(t,s) = \mathbb{E}[|X_t \otimes Y_t\rangle\langle X_s \otimes Y_s|]$ admits a complete Wick reduction to the second-order covariance kernels given by:
\begin{equation}\label{eq:wick_reduction}
\hat{K}(t,s) = \hat{R}_{XX}(t,s) \otimes \hat{R}_{YY}(t,s) + \hat{E}(t,s),
\end{equation}
where $\hat{E}(t,s)$ is the exchange operator uniquely defined by its matrix elements:
\begin{equation}
\langle u \otimes v, \hat{E}(t,s)(u' \otimes v')\rangle = \langle u, \hat{R}_{XY}(t,s)v'\rangle \langle v, \hat{R}_{YX}(t,s)u'\rangle.
\end{equation}
\end{theorem}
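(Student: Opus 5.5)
The plan is to reduce the operator identity \eqref{eq:wick_reduction} to a scalar identity for matrix elements between simple tensors, and then extend by sesquilinearity. I assume, as the Gaussian Reduction Theorem implicitly does, that the pair $(X,Y)$ is \emph{jointly} centered circular Gaussian. That is, for all $t,s\in\Delta$, $u,u'\in H_A$, $v,v'\in H_B$, the vector
\[
\bigl(a,b,c,d\bigr):=\bigl(\langle u,X_t\rangle,\ \langle v,Y_t\rangle,\ \langle u',X_s\rangle,\ \langle v',Y_s\rangle\bigr)
\]
is a centered circular complex Gaussian vector in $\mathbb C^4$. Marginal circularity of $X$ and $Y$ alone does not control the mixed pseudocovariances $\mathbb E[ab]$, so I will state this joint assumption explicitly. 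Gaussianity gives finite fourth moments, so $\hat K(t,s)$ is well defined. With the paper's convention (antilinear in the first slot), its matrix element is
\[
\langle u\otimes v,\hat K(t,s)(u'\otimes v')\rangle=\mathbb E\bigl[a\,b\,\overline{c}\,\overline{d}\bigr].
\]

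The core step is the complex Isserlis formula for a centered circular Gaussian vector:
\[
\mathbb E[z_1z_2\overline{z_3}\,\overline{z_4}]=\mathbb E[z_1\overline{z_3}]\,\mathbb E[z_2\overline{z_4}]+\mathbb E[z_1\overline{z_4}]\,\mathbb E[z_2\overline{z_3}]+\mathbb E[z_1z_2]\,\mathbb E[\overline{z_3}\,\overline{z_4}].
\]
I would derive it by writing each $z_j=x_j+iy_j$ and applying the real Isserlis theorem to the $8$-dimensional real Gaussian vector $(x_j,y_j)$, then regrouping the three real pairings into the complex covariances $\mathbb E[z_j\overline{z_k}]$ and pseudocovariances $\mathbb E[z_jz_k]$ by multilinearity. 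Alternatively, one can differentiate the characteristic function $\exp(-\tfrac12\mathbb E|\mathrm{Re}\sum\bar\theta_jz_j|^2)$ four times.

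Circularity kills the last term, since $\mathbb E[ab]=0$. The remaining pairings are identified directly from the definitions \eqref{DO3}--\eqref{DO5}:
\[
\mathbb E[a\overline c]=\langle u,\hat R_{XX}(t,s)u'\rangle,\qquad \mathbb E[b\overline d]=\langle v,\hat R_{YY}(t,s)v'\rangle,
\]
\[
\mathbb E[a\overline d]=\langle u,\hat R_{XY}(t,s)v'\rangle,\qquad \mathbb E[b\overline c]=\langle v,\hat R_{YX}(t,s)u'\rangle.
\]
The first product is exactly $\langle u\otimes v,(\hat R_{XX}\otimes\hat R_{YY})(u'\otimes v')\rangle$.

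For the extension step, the right-hand side of the defining formula for $\hat E(t,s)$ is antilinear in $u$, $v$ and linear in $u'$, $v'$. Since $H_A$ and $H_B$ are finite-dimensional, it therefore defines a unique sesquilinear form on $H_A\otimes H_B$, and hence a unique operator $\hat E(t,s)$. One can check this by expanding in bases $\{e_i\otimes f_j\}$: the matrix entries $\langle e_i\otimes f_j,\hat E\,e_k\otimes f_l\rangle$ are prescribed, and simple tensors span. Two operators whose matrix elements agree on all pairs of basis simple tensors coincide, which gives \eqref{eq:wick_reduction}.

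I expect the main obstacle to be bookkeeping rather than analysis: getting the conjugation and ordering conventions right, so that the cross term appears as $\langle u,\hat R_{XY}v'\rangle\langle v,\hat R_{YX}u'\rangle$ and not with swapped or conjugated arguments. The other point of care is making the joint-circularity hypothesis explicit, since the vanishing of $\mathbb E[ab]$ is where it is genuinely used.
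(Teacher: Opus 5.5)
Your proposal is correct and follows essentially the same route as the paper: reduce to matrix elements on simple tensors, apply the three-pairing complex Isserlis formula, discard the pseudocovariance term $\mathbb E[ab]\,\mathbb E[\overline c\,\overline d]$ by circularity, identify the two surviving pairings with $\hat R_{XX}\otimes\hat R_{YY}$ and $\hat E$, and pass to the operator identity because simple tensors span $H_A\otimes H_B$. Your explicit \emph{joint} circular Gaussian hypothesis on $(X,Y)$ is a genuine sharpening: the paper's proof gets $\mathbb E[AB]=0$ from circularity of each process separately, and marginal circularity is not enough (e.g.\ scalar $Y_t=\overline{X_t}$ gives $2(\mathbb E|X_t|^2)^2$ on the left-hand side at $s=t$ but only $(\mathbb E|X_t|^2)^2$ on the right).
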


\begin{proof}
Let $u, u' \in H_A$ and $v, v' \in H_B$ be arbitrary vectors. The matrix elements of the fourth-order operator $\hat{K}(t,s)$ are given by the statistical expectation:
\begin{equation}\label{eq:matrix_element}
\langle u \otimes v, \hat{K}(t,s)(u' \otimes v')\rangle = \mathbb{E}[\langle u, X_t \rangle \langle v, Y_t \rangle \overline{\langle u', X_s \rangle \langle v', Y_s \rangle}].
\end{equation}
We map this expression onto the classical Isserlis--Wick theorem for complex random variables by setting:
\begin{align*}
A &= \langle u, X_t \rangle, & B &= \langle v, Y_t \rangle, \\
\bar{C} &= \overline{\langle u', X_s \rangle}, & \bar{D} &= \overline{\langle v', Y_s \rangle}.
\end{align*}
The full combinatorial pairing expansion under the Wick theorem yields exactly three distinct structural combinations:
\begin{equation}\label{eq:full_wick}
\mathbb{E}[A B \bar{C} \bar{D}] = \mathbb{E}[A \bar{C}] \mathbb{E}[B \bar{D}] + \mathbb{E}[A \bar{D}] \mathbb{E}[B \bar{C}] + \mathbb{E}[A B] \mathbb{E}[\bar{C} \bar{D}].
\end{equation}
We analyze each pairing separately using the definition of the second-order covariance kernels:
\begin{enumerate}
    \item \textbf{Pure Subsystem Pairings:} The first term pairs the non-conjugated component of each subsystem with its own macroscopic conjugated counterpart:
    \begin{equation}
    \mathbb{E}[A \bar{C}] = \mathbb{E}[\langle u, X_t \rangle \overline{\langle u', X_s \rangle}] = \langle u, \hat{R}_{XX}(t,s) u' \rangle,
    \end{equation}
    \begin{equation}
    \mathbb{E}[B \bar{D}] = \mathbb{E}[\langle v, Y_t \rangle \overline{\langle v', Y_s \rangle}] = \langle v, \hat{R}_{YY}(t,s) v' \rangle.
    \end{equation}
    Taking their product directly reconstructs the standard tensor product operator representation:
    \begin{equation}
		\label{eq:term1}
    \mathbb{E}[A \bar{C}] \mathbb{E}[B \bar{D}] = \langle u \otimes v, \left( \hat{R}_{XX}(t,s) \otimes \hat{R}_{YY}(t,s) \right) (u' \otimes v') \rangle.
    \end{equation}

    \item \textbf{Cross-Subsystem Pairings:} The second term pairs the temporal state of system $A$ with system $B$, describing the spatial correlation structure:
    \begin{equation}
    \mathbb{E}[A \bar{D}] = \mathbb{E}[\langle u, X_t \rangle \overline{\langle v', Y_s \rangle}] = \langle u, \hat{R}_{XY}(t,s) v' \rangle,
    \end{equation}
    \begin{equation}
    \mathbb{E}[B \bar{C}] = \mathbb{E}[\langle v, Y_t \rangle \overline{\langle u', X_s \rangle}] = \langle v, \hat{R}_{YX}(t,s) u' \rangle.
    \end{equation}
    By utilizing the definition of the cross-system exchange operator $\hat{E}(t,s)$, this structural block evaluates to:
    \begin{equation}\label{eq:term2}
    \mathbb{E}[A \bar{D}] \mathbb{E}[B \bar{C}] = \langle u \otimes v, \hat{E}(t,s) (u' \otimes v') \rangle.
    \end{equation}

    \item \textbf{Pseudocovariance Pairings:} The third possible pairing collects terms of matching conjugation types:
    \begin{equation*}
    \mathbb{E}[A B] = \mathbb{E}[\langle u, X_t \rangle \langle v, Y_t \rangle], \quad \text{and} \quad \mathbb{E}[\bar{C} \bar{D}] = \mathbb{E}[\overline{\langle u', X_s \rangle} \, \overline{\langle v', Y_s \rangle}].
    \end{equation*}
    Because the underlying processes are explicitly defined as circular Gaussian under Definition 1, all pure non-conjugated and pure conjugated pairings collapse identically to zero due to phase-invariance symmetry:
    \begin{equation*}
    \mathbb{E}[A B] = 0 \quad \text{and} \quad \mathbb{E}[\bar{C} \bar{D}] = 0.
    \end{equation*}
    Consequently, this entire combinatorial channel drops out of the expectation profile:
    \begin{equation}\label{eq:term3}
    \mathbb{E}[A B] \mathbb{E}[\bar{C} \bar{D}] = 0 \cdot 0 = 0.
    \end{equation}
\end{enumerate}
Substituting Equations \eqref{eq:term1}, \eqref{eq:term2}, and \eqref{eq:term3} back into the Wick structural expansion \eqref{eq:full_wick} simplifies the matrix element expression to:
\begin{equation*}
\langle u \otimes v, \hat{K}(t,s)(u' \otimes v')\rangle = \langle u \otimes v, \left(\hat{R}_{XX}(t,s) \otimes \hat{R}_{YY}(t,s) + \hat{E}(t,s)\right)(u' \otimes v')\rangle.
\end{equation*}
Since this identity holds tightly for all test vectors on the arbitrary dense domains of $H_A \otimes H_B$, it implies the operator identity $\hat{K}(t,s) = \hat{R}_{XX}(t,s) \otimes \hat{R}_{YY}(t,s) + \hat{E}(t,s)$.
\end{proof}

\end{document}